\documentclass[fleqn,10pt]{wlscirep}
\usepackage[utf8]{inputenc}
\usepackage[T1]{fontenc}

\usepackage[final]{changes}

\definechangesauthor[name={v1small}, color=orange]{v1}

\newtheorem{theorem}{Theorem}
\newtheorem{definition}{Definition}
\newtheorem{proof}{Proof}

\addto\captionsenglish{}
\addto\captionsenglish{}

\title{The effect of distant connections on node anonymity in complex networks}

\author[1,2,*]{Rachel G. de Jong}
\author[2, 1]{Mark P. J. van der Loo}
\author[1]{Frank W. Takes}
\affil[1]{Leiden University, LIACS, 2333 CA Leiden, The Netherlands} 
\affil[2]{Statistics Netherlands, Research and Development, 2492JP The Hague, The Netherlands}
\affil[*]{r.g.de.jong@liacs.leidenuniv.nl}

\begin{abstract}
Ensuring privacy of individuals is of paramount importance to social network analysis research.
Previous work assessed anonymity in a network based on the non-uniqueness of a node's ego network.
In this work, we show that this approach does not adequately account for the strong de-anonymizing effect of distant connections. 
We first propose the use of \emph{$d$-$k$-anonymity}, a novel measure that takes knowledge up to distance $d$ of a considered node into account.
Second, we introduce \emph{anonymity-cascade}, which exploits the so-called infectiousness of uniqueness: mere information about being connected to another unique node can make a given node uniquely identifiable. 
These two approaches, together with relevant ``twin node'' processing steps in the underlying graph structure, offer practitioners flexible solutions, tunable in precision and computation time. 
This enables the assessment of anonymity in large-scale networks with up to millions of nodes and edges. 
Experiments on graph models and a wide range of real-world networks show drastic decreases in anonymity when connections at distance $2$ are considered. 
Moreover, extending the knowledge beyond the ego network with just one extra link often already decreases overall anonymity by over 50\%. 
These findings have important implications for privacy-aware sharing of sensitive network data.
\end{abstract}

\begin{document}

\maketitle

    \flushbottom
    \begin{figure}[h]
            \centering
            \scalebox{1}[1]{
            \includegraphics[width=\textwidth]{./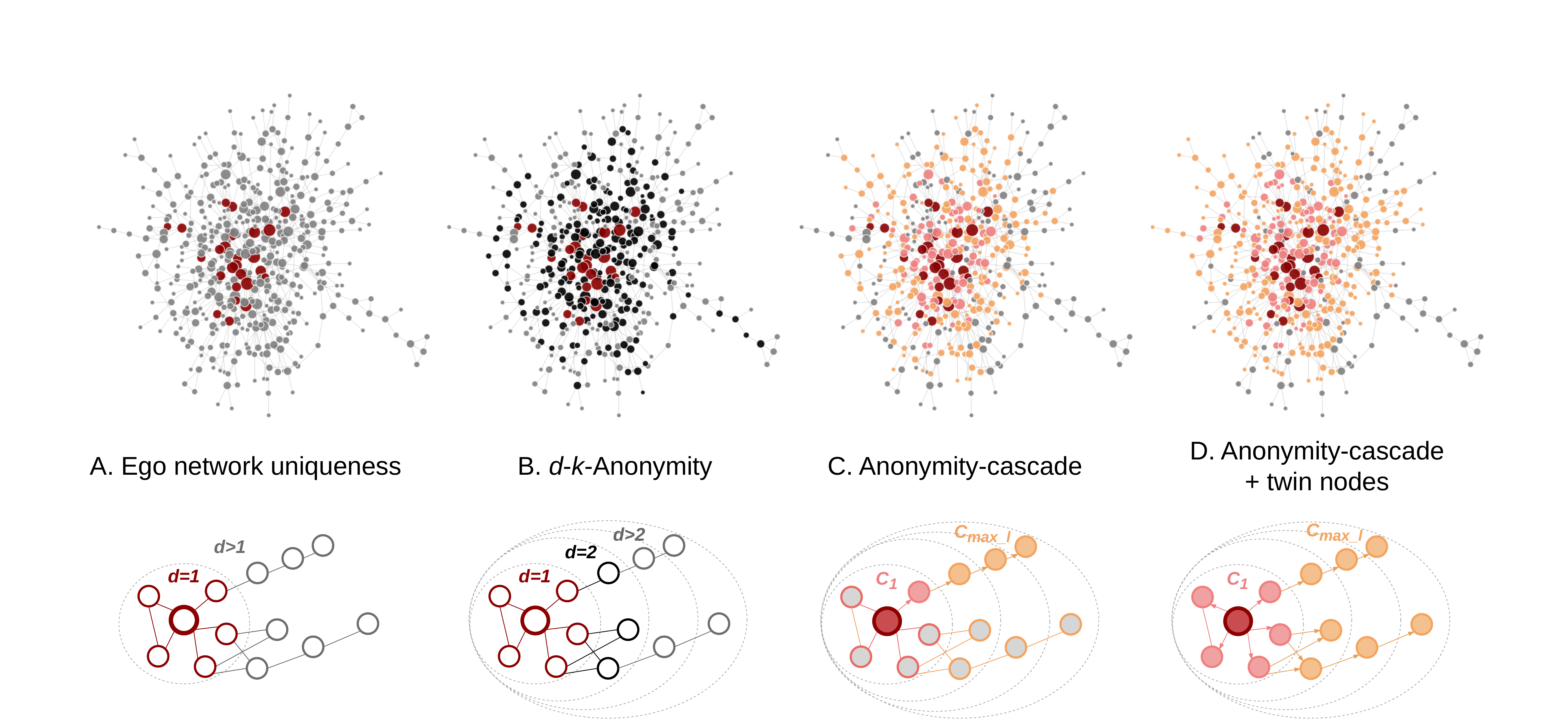}}
            \caption{
            Four approaches for assessing node anonymity: Ego network uniqueness~\cite{romanini2020privacy} (A), followed by the three techniques discussed in this paper: \emph{$d$-$k$-anonymity}~\cite{dejong2023algorithms} (B), \emph{anonymity-cascade} (C) and \emph{anonymity-cascade} with twin nodes (D). \\
            For each approach, the top row shows the uniquely identified nodes in the giant component of the Copnet calls network.~\cite{sapiezynski2019copenhagen} 
            Red nodes are unique using \emph{$1$-$k$-anonymity} (i.e., ego network uniqueness), black nodes with \emph{$2$-$k$-anonymity} (subfigure B only). 
            Pink nodes can be identified using one cascading step ($C_1$), orange nodes with multiple steps ($C_\mathit{max-\ell}$).
            Grey nodes are not uniquely identified using the considered approach.
            The bottom row illustrates an example of a $d$-neighborhood, detailing which knowledge is taken into account by each approach (edge and node outline color).
            Subfigure C and D show the paths traversed by \emph{anonymity-cascade} to identify the pink and orange nodes, given that the red center node is unique for \emph{$1$-$k$-anonymity}.}
           \label{fig:example}
    \end{figure}

\section*{Introduction}\label{sec:intro}

Network science research~\cite{barabasi2016network} is typically about getting a better understanding of the connected structure of a group of people,~\cite{bokanyi2023anatomy} organizations,~\cite{garcia2017uncovering} infrastructural objects~\cite{guimera2004modeling} or other relevant interacting entities.~\cite{barabasi2004network} 
Conducted analyses are often useful for shedding light on
societally relevant problems, such as resilience of technical systems,~\cite{ganin2016operational} predicting systematic financial risk,~\cite{cimini2015systemic} modelling epidemic disease spread~\cite{azizi2020epidemics} or the measurement of socio-economic segregation in a society.~\cite{bojanowski2014measuring, kazmina2023socio}
Crucial for this type of research is the availability of network data representing the interactions that are the object of study.
While pseudonymization is often used to mask the identity of individuals in, for example, a social network dataset, the network structure itself may reveal sensitive information on ``who is who''.
As a result, sharing network data imposes risks on the privacy of the entities represented in it. 
In this paper we set out to discover how we can adequately measure and assess anonymity in complex networks, focusing on methods for discovering how revealing an individual's connections in a network really are. 

Related work on privacy in network centers around two major approaches:~\cite{ji2016graph, li2023private, beigi2020survey} differential privacy~\cite{sala2011sharing, proserpio2012calibrating, wang2013preserving} and $k$-anonymity.~\cite{liu2008towards, romanini2020privacy, zhou2008preserving, hay2007anonymizing, zou2009k, wu2010k}
The first gives randomized answers to user queries such that the privacy of entities are guaranteed, whereas the second enables the sharing of an anonymized version of the network such that there are at least $k$ candidates for each entity. 
Both of these approaches are strongly embedded in the field of Statistical Disclosure Control (SDC), where traditionally, privacy in relational data is studied.~\cite{willenborg2012elements, hundepool2012statistical}
However, network data introduces new challenges since the nodes, by which entities are represented, are not isolated observations.
Unlike tabular data, the anonymity of a node in a network does not solely depend on the node itself, but can be affected by direct and indirect neighbors in the network. 
This comes with substantial methodological and computational challenges related to measuring anonymity in network data. 
\added{A number of different works on anonymity in networks has been published, including various surveys that give a more elaborate overview of this type of work.}~\cite{ji2016graph, li2023private, beigi2020survey}
There exist various works on differential privacy, which aims to provide privacy-preserving answers to queries about the network, possibly to eventually generate a synthetic network data based on anonymized graph properties.~\cite{proserpio2012calibrating, sala2011sharing, wang2013preserving}
\added{Since in this paper we are interested in preventing identity disclosure and ultimately sharing an altered anonymized version of the full network, we have chosen to extend upon the existing line of research}~\cite{liu2008towards, romanini2020privacy, zhou2008preserving, hay2007anonymizing, zou2009k, wu2010k} of $k$-anonymity. 

In this paper, we use the notion of $k$-anonymity and investigate the risk of identity disclosure of nodes based on structural properties of the focal node's surroundings.
Noteworthy is that in the remainder of this work, we use the term ``anonymity'' as a concrete and measurable operationalization of ``privacy''. 
We say that a node is $k$-anonymous if there are $k-1$ equivalent nodes in the network according to a particular measure of equivalence. 
The larger the value of $k$ for a node, the more anonymous the node is.
A network as a whole is said to be $k$-anonymous if all nodes are at least $k$-anonymous.

As we will see when we turn to our experimental results, in some, but definitely not in all networks does the chosen value of $k$\added{, ranging from $k=1$ to 5,} strongly affect overall network anonymity beyond $k=2$.
A value of $k = 2$ corresponds to the situation where a node is anonymous if it is not unique based on the employed definition of equivalence.
In this particular case, on which we largely focus in this paper,  anonymity is effectively equal to non-uniqueness. 
Various equivalence measures have been used in the literature, taking into account the degree,~\cite{liu2008towards} the ego network structure,~\cite{romanini2020privacy, zhou2008preserving} the degree distributions of neighboring nodes~\cite{hay2007anonymizing} or the orbits~\cite{zou2009k, wu2010k} of the node under consideration. 
These measures range from very lenient, accounting for merely the number of connections (degree), to very strict, where nodes are equivalent if they are not distinguishable based on their precise structural position in the network.

All equivalence measures mentioned above correspond to a specific attacker scenario where we assume that someone who tries to de-anonymize entities in the network has a certain type and amount of information.
This introduces a trade-off.
While using a very strict measure would protect against more attacker scenarios, it would at the same time result in fewer $k$-anonymous nodes.
As a result, when one aims to anonymize the network, e.g., by means of perturbation,~\cite{zhou2008preserving} more changes may be required to ensure that all nodes and therewith the network are $k$-anonymous.
This might have a major impact on the similarity to the original network and hence the so-called utility of the resulting anonymized network.
When choosing a measure it is therefore important to account for a realistic amount of attacker information and therewith protect against realistic attacker scenarios.
At the same time, the measure should be computable in a reasonable amount of time for nowadays common network sizes of potentially millions of nodes and edges. 

In this paper, we aim to contribute to existing literature on this topic in three ways.
First, we show the effect on anonymity when one has knowledge beyond the ego network.
Empirical results employing the parameterized measure of  \emph{$d$-$k$-anonymity},~\cite{dejong2023algorithms}\added{ for which parameter $d$ denotes the distance from the considered node,} indicate that the largest decrease in anonymity occurs when considering $2$-neighborhoods (shown in black in Fig.~\ref{fig:example}B) rather than just the ego networks (1-neighborhoods, shown in red in Fig.~\ref{fig:example}B.). 
This holds for both well-known graph models and a wide range of real-world networks.  
Second, we aim to better understand the so-called infectiousness of uniqueness
in networks by introducing \emph{anonymity-cascade} (Fig.~\ref{fig:example}C).
This approach extends the aforementioned approach of \emph{$d$-$k$-anonymity} by means of a cascading step that finds all nodes that can be uniquely identified if an attacker knows that a particular node is connected to a specific unique node, as illustrated by the pink nodes in Fig.~\ref{fig:example}C.
The newly identified nodes can be reused iteratively,  which can result in a cascading effect as illustrated by the orange nodes in Fig.~\ref{fig:example}C. 
Our results on a diverse set of real-world networks demonstrate that even knowledge of one extra link, i.e., conducting one cascading ``step'', frequently reduces overall anonymity by over 50\%. 
Third, we show how regularities in the underlying graph structure, specifically twin nodes,~\cite{gonzalez2019removing} which frequently occur in real-world networks, can be exploited to obtain additional information on certain otherwise indistinguishable entities. 
This is illustrated in Fig.~\ref{fig:example}D.

The remainder of this paper is structured as follows.
In~\hyperref[sec:results]{Results}, we discuss findings resulting from each of the three newly proposed approaches illustrated in Fig.~\ref{fig:example}B-D, starting with 
the parameterized anonymity measure and the cascading algorithm. 
For both, we present results on graph models and real-world networks, before ending with a third and final subsection on the de-anonymizing effect of aforementioned twin nodes. 
We conclude the paper by summarizing the most important results together with possible directions for future work in~\hyperref[sec:con]{Discussion}.
Details about the three approaches, the overall experimental setup, code ensuring reproducibility, as well as relevant theorems and proofs, can be found in~\hyperref[sec:methods]{Methods}.

\section*{Results}\label{sec:results}
In this section, we discuss the three approaches to assess node anonymity in a network, each illustrated in Fig.~\ref{fig:example}.
First, in~\hyperref[sec:meas1]{Beyond the ego network}, we look at the de-anonymizing effect that knowledge about $d$-neighborhoods can have when $d\geq 2$.
Second, in~\hyperref[sec:casc]{Anonymity-cascade}, we extend \emph{$d$-$k$-anonymity} with a cascading step to capture the ``infectiousness of uniqueness''.
For both approaches, we discuss results on both graph models and real-world networks.
Third, in~\hyperref[sec:twin]{Twin nodes}, we look at an approach to leverage regularities in the underlying graph structure and identify even more nodes than with the two aforementioned approaches.
\subsection*{Beyond the ego network}\label{sec:meas1}

In this section, we investigate the effect of knowledge beyond the ego network on anonymity of nodes by using the notion of \emph{$d$-$k$-anonymity}.~\cite{dejong2023algorithms}
We say that two nodes are $d$-equivalent if they are indistinguishable with perfect knowledge about their $d$-neighborhood and position in this neighborhood (See Definition~\ref{def:equivalent} in~\hyperref[sec:methods]{Methods}).
Here, the $d$-neighborhood consists of the node itself, all nodes that can be reached by traversing at most $d$ edges, and all edges between these nodes.
When $d=1$, this corresponds to the ego network of the node.
This is also illustrated in the bottom of Fig.~\ref{fig:example}A-B.
More precisely, for a pair of $d$-equivalent nodes the $d$-neighborhoods are isomorphic (See Definition~\ref{def:iso} in~\hyperref[sec:methods]{Methods}) and their respective position in the $d$-neighborhood is the same.

If, for a specific node, there are $k-1$ nodes to which it is $d$-equivalent, the node is in an equivalence class of size $k$ and we say that the node is $d$-$k$-anonymous. 
If the node is $d$-$1$-anonymous, we also call it unique.
We summarize the uniqueness of a network 
as the fraction of unique nodes.
Thus, a high network uniqueness implies low anonymity, and low uniqueness implies high anonymity. 
The results for \emph{$d$-$k$-anonymity} are computed by the algorithms described in previous work~\cite{dejong2023algorithms} that builds upon a state-of-the-art isomorphism computation tool~\cite{nauty} (see~\hyperref[sec:methods]{Methods} for details).
In the following sections, we discuss results of using the measure of \emph{$d$-$k$-anonymity} on both graph models and a wide range of real-world networks.

    \begin{figure}[!t]
           \centering
           \includegraphics[width=0.9\textwidth]{./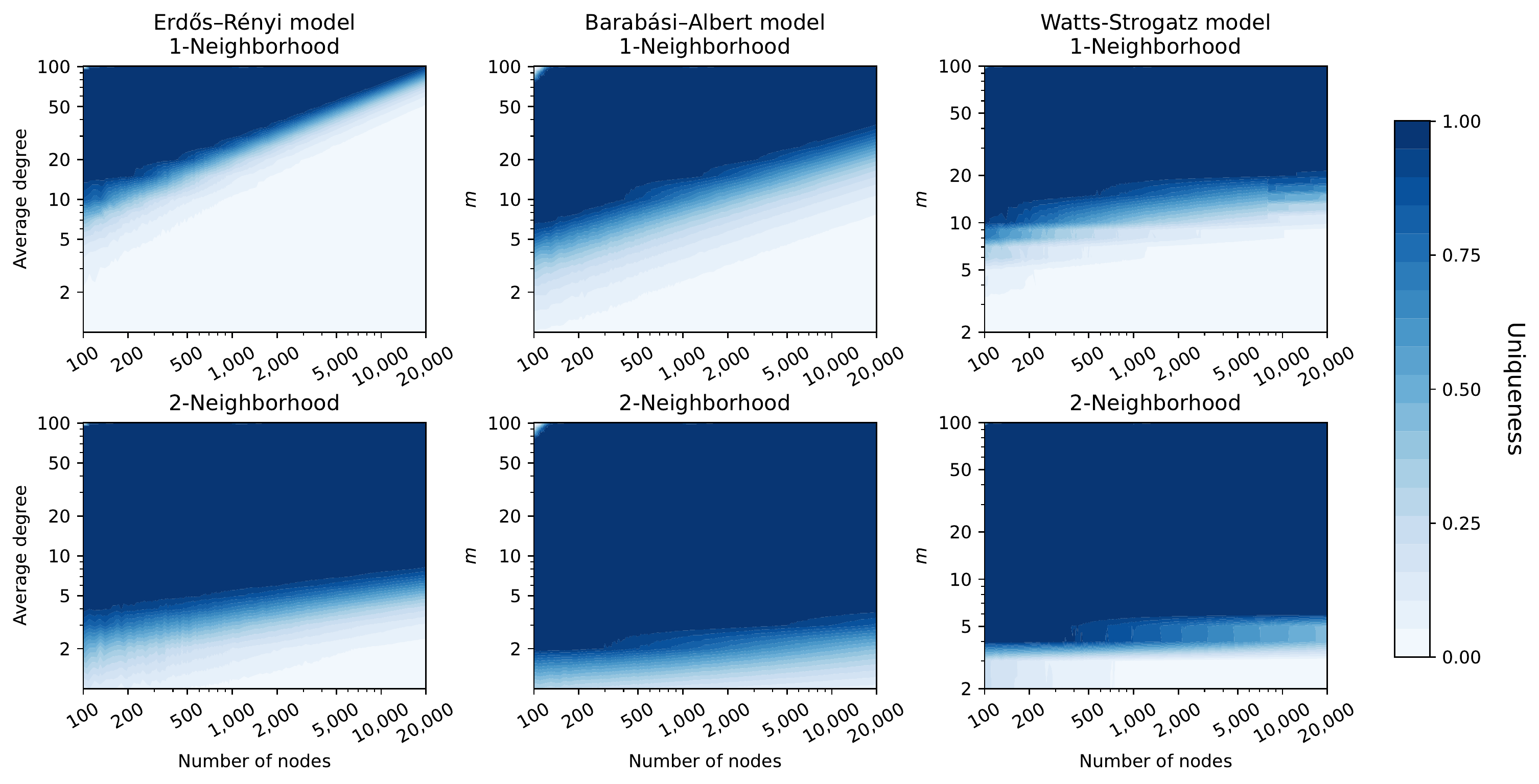}
           \caption{Uniqueness maps using $d$-$k$-anonymity. Maps show network uniqueness, indicated by color, when using information of the 1-neighborhood (top row) and 2-neighborhood (bottom row). Results are shown for the Erdős–Rényi (left), Barabási–Albert (middle) and Watts–Strogatz (right) model with different sizes (horizontal axis) and average degree or\added{ $m$, an } equivalent thereof (vertical axis).
           }
           \label{fig:modelsdk}
    \end{figure}

\subsubsection*{Beyond the ego network in graph models}\label{sub:mod1}

In this section we investigate the uniqueness of networks (i.e., the fraction of unique nodes) when a  possible attacker has perfect knowledge about the ego network or $2$-neighborhood of a node.
We use three common graph models that each generate networks that reflect a different property that is often observed in real-world networks.
For each model, we vary in size and density.
The first graph model, the Erdős–Rényi (ER) model,~\cite{erdos1960evolution} generates edges completely at random.
Second, the Barabási-Albert (BA) model~\cite{barabasi1999emergence} generates edges by means of the preferential attachment mechanism, which results in the skewed degree distribution that is frequently observed in real-world networks.
Third, the Watts-Strogatz (WS) model~\cite{watts1998collective} additionally captures the small world property.
More details about, for example, the used parameters, 
can be found in~\hyperref[sec:methods]{Methods}.

In Fig.~\ref{fig:modelsdk} the results on graph models are shown.
The figures correspond to the uniqueness maps used by Romanini \emph{et al.}~\cite{romanini2020privacy} where the horizontal axis shows the number of nodes, the vertical axis denotes the average degree or $m$, \added{which equals the number connections made per node for the BA model, or the number initial connections for each node for the WS model}\deleted{ indicating the density}. 
The color indicates uniqueness of the graph ranging from 0.0 (white, no unique nodes) to 1.0 (dark blue, all nodes unique).
Each result is averaged over ten generated graphs.

The results using knowledge of the ego network (top) correspond to the results by Romanini \emph{et al.}~\cite{romanini2020privacy} and show a clear connection between the average degree and the fraction of unique nodes.
When the number of nodes grows and the average degree is constant, this fraction tends to decrease, meaning that nodes are overall more anonymous.
Moreover, these figures show a very clear turning point: \added{for the ranges shown,} below the white line almost no nodes are unique while above this line almost all nodes are unique based on their ego network.
\added{Results in~\hyperref[sec:supmat]{Supplementary information} show that this also holds for higher densities, except when the graph is (near)-complete; for this in real-world networks unrealistic setting, all nodes become non-unique.}

However, when we look at the results on the 2-neighborhood computed using \emph{$d$-$k$-anonymity} (bottom row of Fig.~\ref{fig:modelsdk}), we see that the uniqueness increases significantly for all models.
After an average degree of five almost all nodes are unique, and the uniqueness does not strongly decrease as the network size grows.
This shows a large contrast to the results of the 1-neighborhood.
Interestingly for $d=3$ and higher, no large changes occur, which implies that the largest de-anonymizing effect occurs for $d = 2$ (see results up to $d = 5$ in the~\hyperref[sec:supmat]{Supplementary information}).

\begin{figure}[t!]
    \centering
    \includegraphics[width=0.8\linewidth]{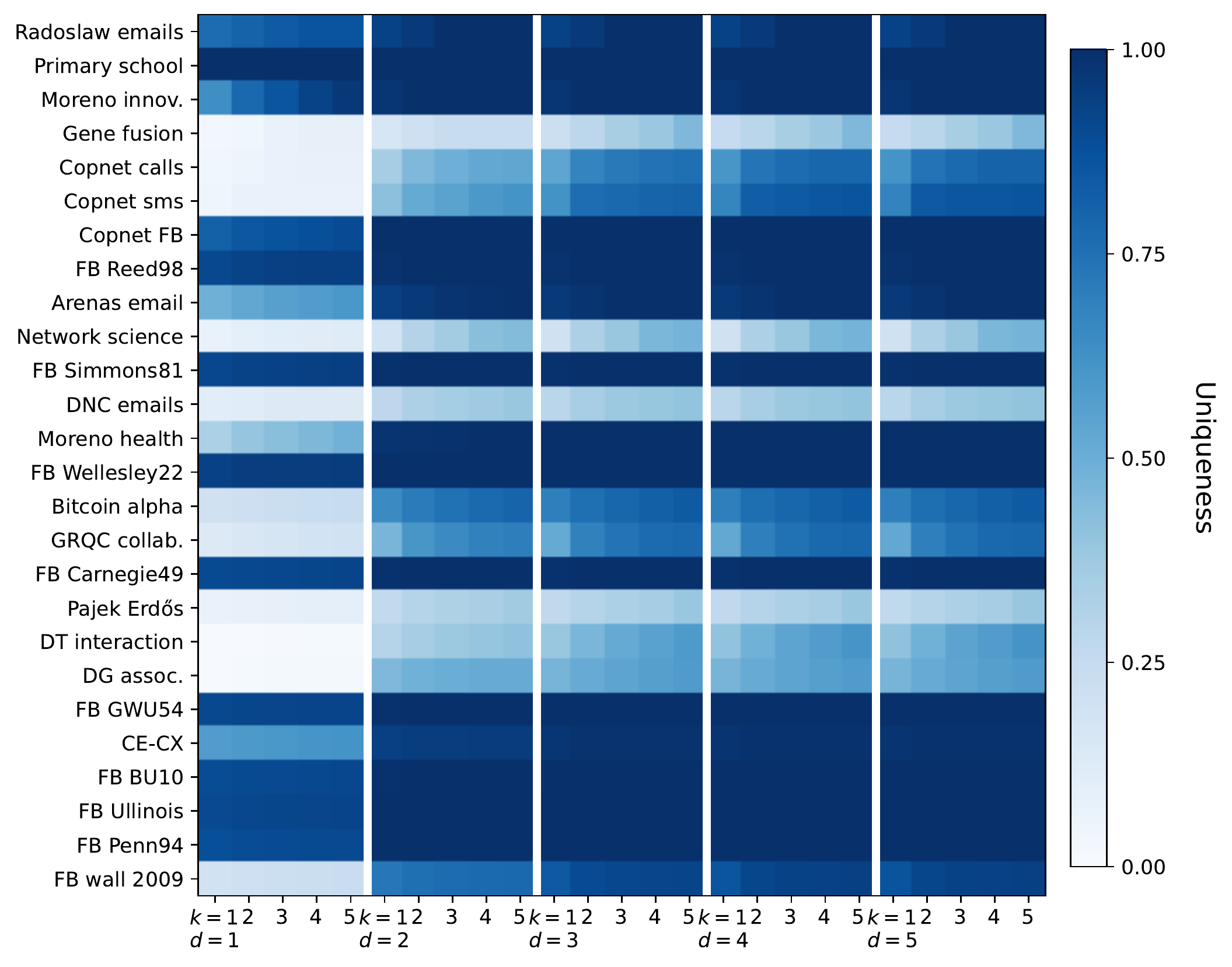}
    \caption{$d$-$k$-Anonymity in real-world networks. Results are shown for for the 29 real-world networks in Table~\ref{tab:data} for which \emph{$d$-$k$-anonymity} with $d=5$ could be computed within three hours. Each cell denotes the fraction (ranging from 0.0 (white) to 1.0 (dark blue)) of nodes that are $\leq k$-anonymous when accounting for knowledge of the $d$-neighborhood. 
    }
    \label{fig:dk-real}
\end{figure}

\begin{table}[t!]
\scriptsize
\centering
\begin{tabular}{@{}rrrrrrrrrl@{}}

\toprule
    & \multicolumn{1}{l}{} & \multicolumn{1}{l}{} & \multicolumn{1}{l}{} & \multicolumn{1}{l}{} & \multicolumn{2}{c}{\begin{tabular}[c]{@{}c@{}}Runtime \\ \emph{$d$-$k$-anonymity}\end{tabular}} & \multicolumn{2}{c}{\begin{tabular}[c]{@{}c@{}}Runtime\\ \emph{Anonymity-cascade}\end{tabular}} & \multicolumn{1}{l}{}  \\
\multicolumn{1}{r}{Network}    & \multicolumn{1}{r}{Nodes} & \multicolumn{1}{r}{Edges} & \multicolumn{1}{r}{Fraction twins} & \multicolumn{1}{r}{$max-\ell$} & \multicolumn{1}{c}{d=1} & \multicolumn{1}{c}{d=2} & \multicolumn{1}{c}{$C_1$} & \multicolumn{1}{c}{$C_\mathit{max-\ell}$} & \multicolumn{1}{l}{Type}                      \\ \midrule
Radoslaw emails~\cite{kunegis2013konect}          & 167        & 3,250     & 0.072        & 3        & 0.02    s & +  < 0.01 s    & + < 0.01 s & + < 0.01  s    & Communication         \\
Primary school~\cite{sociopatterns}               & 236        & 5,899     & 0.000        & 1        & 0.02    s & +  < 0.01 s    & + < 0.01 s & + < 0.01  s    & Human contact         \\
Moreno innov.~\cite{kunegis2013konect}            & 241        & 923       & 0.025        & 3        & < 0.01  s & +  < 0.01 s    & + < 0.01 s & + < 0.01  s    & Communication         \\
Gene fusion~\cite{kunegis2013konect}              & 291        & 279       & 0.753        & 6        & < 0.01  s & +  < 0.01 s    & + < 0.01 s & + < 0.01  s    & Biological            \\
Copnet calls~\cite{sapiezynski2019copenhagen}     & 536        & 621       & 0.287        & 10       & < 0.01  s & +  0.01   s    & + < 0.01 s & + < 0.01  s    & Communication         \\
Copnet sms~\cite{sapiezynski2019copenhagen}       & 568        & 697       & 0.285        & 7        & < 0.01  s & +  0.01   s    & + < 0.01 s & + < 0.01  s    & Communication         \\
Copnet FB~\cite{sapiezynski2019copenhagen}        & 800        & 6,418     & 0.005        & 4        & 0.02    s & +  0.01   s    & + < 0.01 s & + < 0.01  s    & Online social         \\
FB Reed98~\cite{networksrepository}               & 962        & 18,812    & 0.012        & 3        & 0.05    s & +  0.01   s    & + 0.01   s & + 0.02    s    & Online social         \\
Arenas email~\cite{kunegis2013konect}             & 1,133      & 5,451     & 0.042        & 5        & 0.02    s & +  0.01   s    & + < 0.01 s & + < 0.01  s    & Communication         \\
Network science~\cite{kunegis2013konect}          & 1,461      & 2,742     & 0.755        & 6        & 0.01    s & +  0.01   s    & + < 0.01 s & + < 0.01  s    & Co-autorship          \\
FB Simmons81~\cite{networksrepository}            & 1,518      & 32,988    & 0.011        & 3        & 0.12    s & +  0.02   s    & + 0.01   s & + 0.02    s    & Online social         \\
DNC emails~\cite{kunegis2013konect}               & 1,893      & 4,385     & 0.706        & 4        & 0.02    s & +  0.38   s    & + < 0.01 s & + < 0.01  s    & Online social         \\
Moreno health~\cite{kunegis2013konect}            & 2,539      & 10,455    & 0.003        & 5        & 0.03    s & +  0.04   s    & + < 0.01 s & + < 0.01  s    & Human social          \\
FB Wellesley22~\cite{networksrepository}          & 2,970      & 94,899    & 0.000        & 4        & 0.4     s & +  0.07   s    & + 0.04   s & + 0.08    s    & Online social         \\
Bitcoin alpha~\cite{networksrepository}           & 3,783      & 14,124    & 0.306        & 5        & 0.04    s & +  0.36   s    & + < 0.01 s & + 0.01    s    & Online social (trust) \\
GRQC collab.~\cite{snapnets}                      & 5,242      & 14,484    & 0.455        & 8        & 0.03    s & +  0.04   s    & + < 0.01 s & + < 0.01  s    & Co-autorship          \\
FB Carnegie49~\cite{networksrepository}           & 6,637      & 249,967   & 0.008        & 4        & 1.37    s & +  0.44   s    & + 0.09   s & + 0.18    s    & Online social         \\
Pajek Erdős~\cite{kunegis2013konect}              & 6,927      & 11,850    & 0.737        & 6        & 0.02    s & +  0.19   s    & + 0.01   s & + 0.02    s    & Co-autorship          \\
DT interaction~\cite{biosnapnets}                 & 7,341      & 15,138    & 0.572        & 12       & 0.07    s & +  2.88   s    & + < 0.01 s & + < 0.01  s    & Biological            \\
DG assoc.~\cite{biosnapnets}                      & 7,813      & 21,357    & 0.531        & 8        & 0.21    s & +  5.13   s    & + < 0.01 s & + < 0.01  s    & Biological            \\
FB GWU54~\cite{networksrepository}                & 12,193     & 469,528   & 0.006        & 4        & 2.64    s & +  0.92   s    & + 0.18   s & + 0.36    s    & Online social         \\
Anybeat~\cite{networksrepository}                 & 12,645     & 49,132    & 0.500        & 5        & 0.41    s & +  1.34   h    & + 0.02   s & + 0.04    s    & Online social         \\
CE-CX~\cite{networksrepository}                   & 15,229     & 245,952   & 0.021        & 6        & 1.04    s & +  1.56   s    & + 0.09   s & + 0.19    s    & Biological            \\
Astro Physics~\cite{networksrepository}           & 18,771     & 198,050   & 0.305        & 6        & 0.72    s & +  1.91   s    & + 0.05   s & + 0.11    s    & Co-autorship          \\
FB BU10~\cite{networksrepository}                 & 19,700     & 637,528   & 0.006        & 4        & 3.17    s & +  1.44   s    & + 0.25   s & + 0.50    s    & Online social         \\
FB Uillinois~\cite{networksrepository}            & 30,664     & 1,048,574 & 0.002        & 4        & 5.96    s & +  2.66   s    & + 0.39   s & + 0.78    s    & Online social         \\
Enron email~\cite{snapnets}                       & 36,692     & 183,831   & 0.528        & 6        & 0.90    s & +  1.44   m    & + 0.06   s & + 0.12    s    & Communication         \\
FB Penn94~\cite{networksrepository}               & 41,536     & 1,362,220 & 0.002        & 4        & 8.95    s & +  5.39   s    & + 0.50   s & + 1.00    s    & Online social         \\
FB wall 2009~\cite{kunegis2013konect}             & 46,952     & 183,412   & 0.133        & 8        & 0.54    s & +  1.80   s    & + 0.05   s & + 0.13    s    & Communication         \\
Brightkite~\cite{networksrepository}              & 58,228     & 214,078   & 0.258        & 8        & 0.78    s & +  18.41  s    & + 0.06   s & + 0.14    s    & Online social         \\
The marker cafe~\cite{dataforgoodlab}             & 69,413     & 1,644,843 & 0.200        & 5        & 37.96   s & +  10.46  m    & + 0.67   s & + 1.35    s    & Human contact         \\
Slashdot zoo~\cite{kunegis2013konect}             & 79,116     & 467,731   & 0.274        & 7        & 3.36    s & +  2.62   m    & + 0.15   s & + 0.34    s    & Online social         \\
Twitter~\cite{kunegis2013konect}                  & 465,017    & 833,540   & 0.801        & 5        & 59.53   s & +  6.04   h    & + 0.29   s & + 0.62    s    & Online social         \\
DBLP~\cite{kunegis2013konect}                     & 1,824,701  & 8,344,615 & 0.402        & 10       & 51.37   s & +  11.58  m    & + 28.1   s & + 57.18   s    & Co-autorship          \\
Flixster~\cite{kunegis2013konect}                 & 2,523,386  & 7,918,801 & 0.631        & 8        & 3.10    m & +  8.57   h    & + 18.58  s & + 37.81   s    & Online social         \\
Youtube~\cite{kunegis2013konect}                  & 3,223,585  & 9,375,374 & 0.336        & 15       & 13.24   m & +  > 1    week & + 30.68  s & + 1.04    m     & Online social        \\
\bottomrule
\end{tabular}
\caption{Overview of the real-world networks used in the experiments. For each network, we list the number of nodes, edges, fraction of twin nodes, the highest attained cascading level ($max-\ell$) and runtimes of the experiments performed containing the total runtime ($d=1)$ and runtime additional to computing $d=1$, indicated by ``+'' for $d=2$, $C_1$ and $C_\mathit{max-\ell}$.}
\label{tab:data}
\end{table}

\subsubsection*{Beyond the ego network in real-world network data}\label{sub:emp1}
For the next set of experiments, we used a wide range of real-world networks varying in size, density and category.
All networks are publicly available and can be found in their corresponding repositories cited in Table~\ref{tab:data}, which in addition to various experimental results on runtime (further addressed in~\hyperref[sec:casc]{Anonymity-cascade}) summarizes for each network elementary characteristics such as the number of nodes and edges and the type of network data, covering, e.g., online social networks, co-authorship and biological networks.

In Fig.~\ref{fig:dk-real}, results are shown for a range of real-world networks for which \emph{$d$-$k$-anonymity} could be computed within three hours up to and including $d=5$.~\cite{dejong2023algorithms}
For distance $d=1$ to $d=5$ (the five separated columns), we show for $k=1$  to $k=5$ by means of color intensity which fraction of the nodes is unique. 
By reporting on different values for $k$, we aim to take into account that in some cases not being unique does not ensure sufficient privacy, and larger values for $k$ are commonly used.

For most networks, similar to previously discussed results on graph models, we observe the largest increase in uniqueness when moving from $d=1$ to $d=2$.
On average, the absolute increase in uniqueness 
equals 0.24 with the highest increase being 0.65 for ``Moreno health''.
For 12 out of 26 networks, this additional knowledge more than doubles the number of unique nodes.
After $d=2$ an average increase of 0.04 is observed when increasing to $d=5$, with the largest value of 0.27 for the ``Copnet calls'' network.
This shows that for most networks, moving from knowledge about the ego network to knowledge about the 2-neighborhood has the largest effect on anonymity.
In the figure, we can overall distinguish between three different cases: 
1) a high uniqueness at $d=1$, or if there is a low uniqueness at $d=1$, there is either
2) a high uniqueness at $d=2$, or
3) a low uniqueness at all distances.
\added{In \hyperref[sec:supmat]{Supplementary information}, we include results aiming to correlate the uniqueness found to various graph properties.
For the networks presented in Table~\ref{tab:data}, we find that networks with a larger diameter or average path length tend to have a lower uniqueness. 
Networks with high degrees or density tend to have a higher uniqueness. This is also shown by the work of Romanini \emph{et al.}~\cite{romanini2020privacy} and is similar to earlier results obtained for the graph models.}

Additionally, we compare different values for $k$, where we measure the fraction of nodes that are at most $k$-anonymous, and hence the fraction of nodes for which there are at most $k$ candidates for an attacker with knowledge of their $d$-neighborhood.
Increasing the value for $k$ to 5 results in an average increase of 0.05 to 0.08 with the largest increase equal to 0.33 for the ``Moreno innovation'' communication network.
The results show that in many cases larger values beyond $k=2$ \added{up to $k=5$} do not result in a large decrease in anonymity.
Thus, we can learn a lot by only distinguishing between unique and non-unique nodes to measure anonymity.
With this in mind, and in the interest of readability of further results, we choose to report on uniqueness ($k = 1$) and \emph{$d$-$k$-anonymity} with $d=1$ and $d=2$ in the remainder of this paper.
Overall, we conclude that accounting for knowledge beyond the ego network in both graph models and real-world networks shows a significant decrease in node anonymity.
For completeness, a specific figure showing the uniqueness using different values for $d$ can be found in~\hyperref[sec:supmat]{Supplementary information}.
\subsection*{Anonymity-cascade}\label{sec:casc}

The measure of \emph{$d$-$k$-anonymity} employed above, while more informative than ego network uniqeueness, has two noteworthy disadvantages.
First, due to isomorphism computations of possibly large neighborhoods, for larger values of $d$, this approach is computationally expensive~\cite{dejong2023algorithms} (see also the runtimes for $d=2$ in the seventh column of Table~\ref{tab:data}).
Second, knowledge of the $2$-neighborhoods can be an unrealistic attacker scenario; in particular if the 2-neighborhood has a complex structure.
However, it is not unreasonable to assume that an attacker obtains some information beyond the ego network, especially if the 1-neighborhood is small or the 2-neighborhood sparse.
If that knowledge includes that the node is connected to a unique node, which can be concluded using knowledge of the 1-neighborhood, this may strongly decrease the number of candidates. 
In some cases this can be sufficient to uniquely identify a node.

We propose to explicitly detect this by introducing \emph{anonymity-cascade} ($C_{\ell}$), an algorithm that extends \emph{$d$-$k$-anonymity} and accounts for the so-called ``infectiousness of uniqueness''.
We assume that an attacker has knowledge about the 1-neighborhoods of two nodes, of which one is unique, and that there is a connection between them.
The \emph{anonymity-cascade} algorithm starts by finding all nodes that can be uniquely identified by knowing that this node 1) is connected to a unique node $u$ using \emph{$1$-$k$-anonymity} and 2) is unique in the set of neighbors of node $u$. 
We refer to this as the first level of the cascading algorithm, denoted $C_1$.
The nodes identified with $C_1$ can then be used to continue this cascading effect among further levels in a Breadth-First Search manner for $\ell$ steps or ``levels'', as illustrated in Fig.~\ref{fig:example}C. 
In this figure, the process starts at the unique red node.
From this node, the pink node can be uniquely identified ($C_1$) and from there the nodes can be used to identify the orange nodes ($C_{\geq2}$).
This process can be repeated until no more unique nodes are found, the then attained level is called $\mathit{max-\ell}$.
We refer to this as cascading final, denoted $C_{max-\ell}$.
While going beyond the first level mimics a less realistic attacker scenario, this approach gives insights into how far the cascading effect can continue (shown in the fifth column of Table~\ref{tab:data}), and the effect this can potentially have.

    \begin{figure}[t!]
            \centering
            \centering
            \includegraphics[width=0.99\textwidth]{./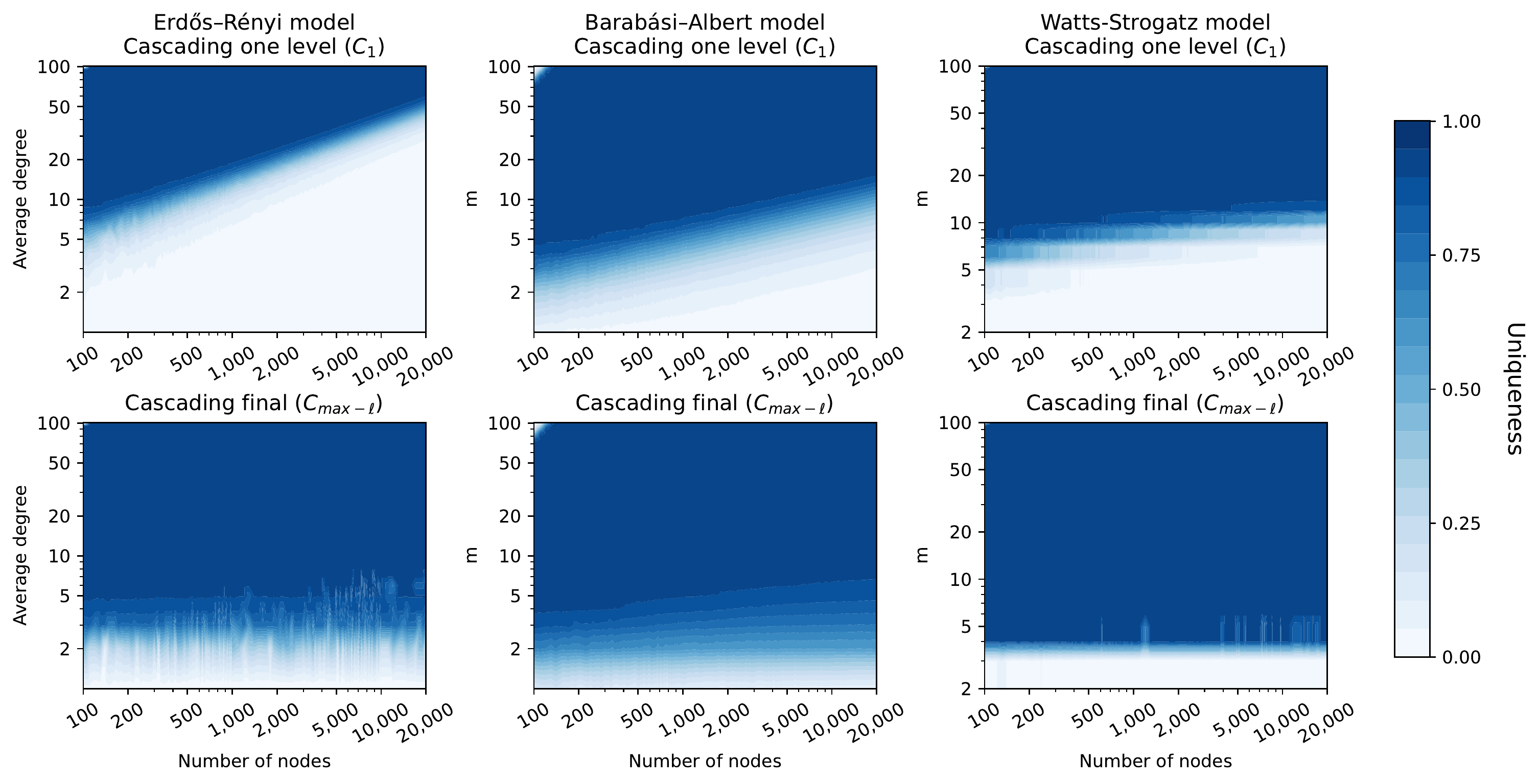}
            \caption{Uniqueness maps using \emph{anonymity-cascade}.
            Maps show network uniqueness, indicated by color, when using one level of cascading (top row) and up to the final level of cascading (bottom row).
            Results are shown for the Erdős–Rényi (left), Barabási–Albert (middle) and Watts–Strogatz (right) model with different sizes (horizontal axis) and average degree or\added{ $m$, an} equivalent thereof (vertical axis).
            }
           \label{fig:modelsc}
    \end{figure}

For one level of cascading ($C_1$), the measure is more strict than \emph{$1$-$k$-anonymity} and at most as strict as \emph{$2$-$k$-anonymity}, which is explained in Theorem~\ref{thm:casc} and Proof~\ref{proof:casc} in~\hyperref[sec:methods]{Methods}.
Additionally, \emph{anonymity-cascade} is less expensive to compute than \emph{$2$-$k$-anonymity} and allows us to assess anonymity in larger networks with millions of nodes, as shown in Table~\ref{tab:data}.
A more detailed description of \emph{anonymity-cascade} can be found in~\hyperref[sec:methods]{Methods}.

\subsubsection*{Anonymity-cascade in graph models}\label{sub:mod2}
In Fig.~\ref{fig:modelsc}, the results of \emph{anonymity-cascade} on graph models can be found, similar to the uniqueness maps in Fig.~\ref{fig:modelsdk}.
The top row shows the results of $C_1$ (one level of cascading), which resemble the results of \emph{$d$-$k$-anonymity} with $d=1$. Apparently, for these graph models, having knowledge about one additional link does not strongly affect the overall anonymity.
When the algorithm continues up to its final level ($C_\mathit{max-\ell}$), as shown in the bottom row, the results change significantly and many more nodes are unique.

The maximal depth reached by \emph{anonymity-cascade} ($max-\ell$) can be very high; averages of 38, 14 and 15 are observed for ER, BA and WS respectively. Especially for sparse graphs with over 15,000 nodes high values are observed (results can be found in~\hyperref[sec:supmat]{Supplementary information}).
Figure~\ref{fig:modelsc} also shows that for $C_\mathit{max-\ell}$ the fraction of unique nodes is more stable when the graph size increases compared to \emph{$2$-$k$-anonymity} in Fig.~\ref{fig:modelsdk} for the ER and WS models.
This seems to indicate that in graph models, which are more random than real-world networks, cascading has a small but local effect that can continue for many levels, especially in large graphs.

    \begin{figure}[b!]
        \centering
        \includegraphics[width=\linewidth]{./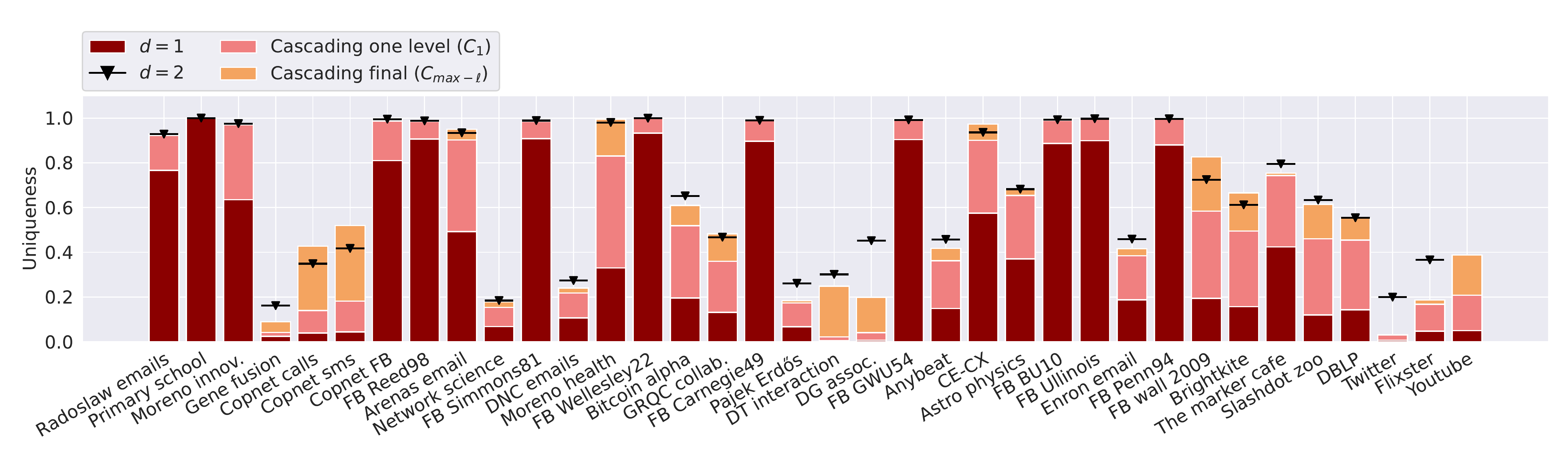}
        \caption{Uniqueness in real-world networks. Fraction of unique nodes (vertical axis) on different datasets (horizontal axis) when accounting for different levels of information: 1-neighborhood (red), 2-neighborhood (black line with triangle), cascading one level (pink) and the cascading final (yellow).}
        \label{fig:cascading}
    \end{figure}

\subsubsection*{Anonymity-cascade in real-world network data}\label{sub:emp2}
The results in Fig.~\ref{fig:cascading} 
show the fraction of unique nodes when using \emph{anonymity-cascade} ($C_1$ and $C_\mathit{max-\ell}$)
compared to \emph{$d$-$k$-anonymity} with $d=1$ and $d=2$.
The results show that when having additional knowledge of one extra link, the fraction of uniquely identifiable nodes doubles for 19 
out of 36 networks.
The largest \emph{anonymity-cascade} increase equals 0.5 for ``Moreno health'': while at $d=1$ a fraction of 0.33 is uniquely identified, one level of cascading results in a uniqueness of 0.83.

The results additionally show that the fractions obtained by $C_1$ are often close to the fraction identified by \emph{$2$-$k$-anonymity}.
On average the difference equals just 0.09.
However, for some networks this difference is larger: for 13 networks this fraction is still larger than 0.1.
Overall, while the runtimes in Table~\ref{tab:data} showed that computing \emph{$2$-$k$-anonymity} can be computationally expensive, especially in large networks, using \emph{anonymity-cascade} provides an adequate estimate of \emph{$2$-$k$-anonymity} in a reasonable amount of time, even for networks with millions of nodes and edges.

When continuing the cascading effect until the the final level ($C_\mathit{max-\ell}$), this results in an additional increase in uniqueness for many networks.
While this level of knowledge is less realistic as an attacker scenario, this gives insights into how far one could exploit this cascading effect.
For five 
of the networks, this results in a uniqueness increase larger than a factor two.
The cascading approach can hence be effective, even at higher levels. 
\added{As shown in Table~\ref{tab:data}, the value for $max-\ell$ achieved can differ per network. In ~\hyperref[sec:supmat]{Supplementary information}, we include results aiming to explain the difference. Large average path lengths and diameter seem to result in longer possible cascading paths for these networks, which intuitively makes sense given that the value is based on paths. Higher degrees and densities are overall likely to result in shorter paths.}
However, contrary to what the results for graph models show, for most real-world networks the largest increase in uniqueness happens at $C_1$ (see~\hyperref[sec:supmat]{Supplementary information} for a figure detailing the drastically decreasing effect per subsequent level).
\subsection*{Twin nodes}\label{sec:twin}

The previous two approaches can be extended by means of  a ``twin node''~\cite{gonzalez2019removing} processing step. 
This concerns the identification of sets of nodes that all share the exact same neighbors \added{which we refer to as twin nodes}. 
Focusing on the example of two pairs of such nodes in Fig.~\ref{fig:twin}B-C, we distinguish two cases: either the nodes are connected to the same nodes (open twin nodes, Fig.~\ref{fig:twin}B), or they additionally have a connection between each other (closed twin nodes, Fig.~\ref{fig:twin}C). 
This differs from the case illustrated in Figure~\ref{fig:twin}A.
If a full network is shared in a pseudonymized format, such as the networks listed in Table~\ref{tab:data}, then sets of twin nodes can be derived from the network itself, without requiring any additional external information.
It turns out that twin nodes can occur frequently in real-world networks: in Table~\ref{tab:data}, fractions up to 0.801 are observed. 

In Proof~\ref{proof:twin}, we show that twin nodes are indistinguishable based on any structural property, implying that any structure-based measure for equivalence can not distinguish between these nodes.
As a result\deleted{s}, if a node has at least one twin, it can not be unique. 
At the same time, if in an attempt to identify entities\deleted{, given a certain amount of knowledge,} all candidates for a node are twin, then this gives the same information as when the node is uniquely identified in the network. 

The reason is that for these nodes we know both their exact structural position in the network and which nodes they are connected to, which is the same information if a node is uniquely identified. 
This notion relates to group disclosure in SDC:~\cite{aggarwal2008general} even if there are \replaced{multiple candidates}{are groups of $k$ candidates} for an entity, it might still be possible to derive sensitive information if all candidates have the same attribute\replaced{. I}{, i}n our situation,\replaced{ their structural position in the network and the connections of the node}{ if they are twins}. 
\added{In SDC, the problem is overcome by introducing the notion of $\ell$-diversity~\cite{machanavajjhala2007diversity} which extends $k$-anonymity with the requirement that for each of the $k$ candidates, there should be at least $\ell$ different sensitive values.}

We process twin nodes in our approaches as follows.
First, for \emph{$d$-$k$-anonymity}, we say that a node is $twin$-unique if either the node is unique, or all candidates for the node are twins of each other.
Second, in \emph{anonymity-cascade}, we start with all nodes that are twin-unique using \emph{$1$-$k$-anonymity}.
If in the cascading step all candidates are twins, they are also twin-unique, and we continue the cascading process from each of the nodes (see also the example in Fig.~\ref{fig:example}D).

    \begin{figure}[!b]
        \centering
        \includegraphics[width=1.0\linewidth]{./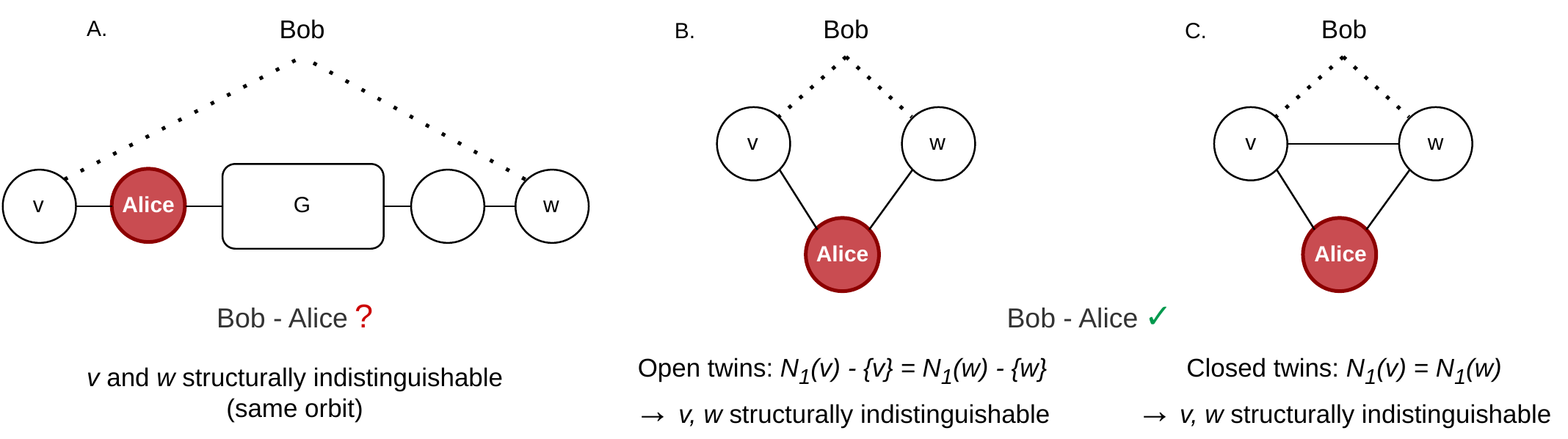}
        \caption{Illustration of two types of structurally indistinguishable nodes. (A) Two candidate nodes for Bob (indicated by dotted lines) are structurally indistinguishable, but do not share their connections. 
        Given that the position of Alice in the network is known (e.g. using \emph{$1$-$k$-anonymity},
        there is no certainty about the connection between Bob and Alice.
       (B, C) All candidates for Bob are twin-unique. The nodes are structurally indistinguishable (see Theorem~\ref{thm:twin} and Proof~\ref{proof:twin}) and an attacker can be certain about the connections of Bob and thus the connection with Alice.}
        \label{fig:twin}
    \end{figure}

    \begin{figure}[!t]
        \centering
        \includegraphics[width=\linewidth]{./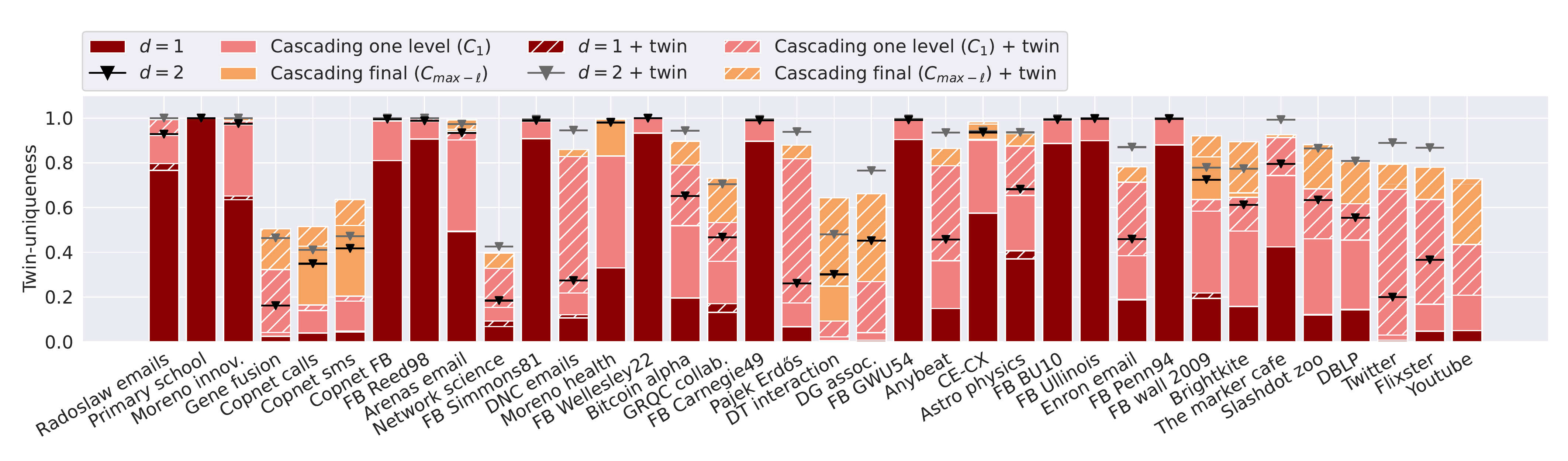}
        \caption{Twin-uniqueness in real-world networks. Fraction of (twin)-unique nodes (vertical axis) on different datasets (horizontal axis) when accounting for different levels of information: 1-neighborhood (red), 2-neighborhood (black and grey line with triangle), one level of cascading (pink) and all levels of cascading (yellow). Results without hatching correspond to results in Fig.~\ref{fig:cascading}. Results with hatching show the increase achieved by including twin-unique nodes.}
        \label{fig:twin_results}
    \end{figure}

\subsubsection*{Twin nodes in real-world network data}\label{sub:emp3}
Figure~\ref{fig:twin_results} shows the fraction of twin-unique nodes for the considered corpus of real-world networks, extending upon Fig.~\ref{fig:cascading}.
Accounting for twin-unique nodes in one level of \emph{anonymity-cascade}  results in an average absolute increase of 0.16, compared to uniqueness. 
For the ``Twitter'' network the largest difference is observed resulting in a twin-uniqueness of 0.65.
An explanation for this strong effect is in the high fraction of twin nodes in the network (0.801, see Table~\ref{tab:data}).
These types of nodes are likely frequently occurring due to the preferential attachment process often observed in these types of social networks.~\cite{barabasi2016network}
This results in many nodes with a low degree, which would be identified using one cascading step if their neighbor is unique. 
Overall, the relative increase is larger if the network has a higher fraction of twin nodes (we detail this relation in~\hyperref[sec:supmat]{Supplementary information}).

For cascading final and \emph{$d$-$k$-anonymity} with $d=2$, we observe similar average increases of 0.21 and 0.18 respectively.
The largest increase is again found for the ``Twitter'' network and equals 0.76	and 0.69.
In contrast, merely accounting for twin nodes using the plain measure of \emph{$1$-$k$-anonymity} almost never increases the fraction; differences of at most 0.04 are observed.
This small effect is likely due to the fact that there are often multiple sets of twin nodes with the same ego network implying that these nodes often have more frequently occurring, likely simple, ego networks.

\section*{Discussion}\label{sec:con}

In this work we discussed new measures and algorithms for anonymity which each correspond to a different attacker scenario, i.e., amount of knowledge of the network structure, and vary in strictness and required computational effort.

First, we explored \emph{$d$-$k$-anonymity} and found that compared to merely measuring ego network uniqueness, information about 2-neighborhoods drastically decreased the anonymity of nodes in both graph models and a wide range of real-world networks. 
However, perfect knowledge of the $2$-neighborhood might in some cases be an unrealistically large amount of knowledge, especially if the ego network is dense. 
Therefore, we extended the measure of \emph{$d$-$k$-anonymity} by means of a cascading step, resulting in \emph{anonymity-cascade}, which models the scenario in which a possible attacker has knowledge about ego networks of two linked nodes, of which one is unique.
\emph{Anonymity-cascade} adds to existing measures in two ways.
First, it accounts for the scenario where a possible attacker has more information than the ego network, but less than the 2-neighborhood.
Second, since \emph{anonymity-cascade} is far less expensive to compute, it can effectively measure anonymity on larger networks with millions of nodes and over ten million edges in minutes.

When assuming above mentioned additional knowledge of one link, we observed major increases in the uniqueness (and thus decreases in anonymity) of virtually all of the considered real-world network datasets.
For 19 of the investigated networks, using one level of cascading more than doubled the fraction of unique nodes.
When continuing this cascading effect, the anonymity decreased further for some networks, but the effect of the first link remains the largest.
Based on results using \emph{$d$-$k$-anonymity} and \emph{anonymity-cascade}, we argue the de-anonymizing effects can not be ignored, and that, as opposed to what was assumed in previous work, information beyond the ego network should be taken into account when measuring anonymity of individuals in networks.

Lastly, we discussed that if all candidates for a node are connected to the exact same nodes, i.e., are twin nodes, the same knowledge can be obtained as when a node is uniquely identified, without any additional attacker knowledge.
Taking into account the notion of twin-uniqueness, which includes all nodes for which all candidates are twin, we observed a large decrease in anonymity for many of the real-world networks.

With this work, we aim to emphasize the importance of research on anonymity measures for networks. 
But moreover, we wish to stress the relevance of balancing the trade-off between attacker knowledge, measure strictness and computation time. 
While we have now arrived at an approach that can process large networks with millions of nodes and edges in a reasonable amount of time, there are still many directions for possible future work.
One avenue for future work could be to explore variants of the cascading measure, using a different starting knowledge than the uniqueness of the ego network. 
Another avenue of research could be to investigate how other graph properties can be exploited, such as distance between candidate nodes, or their membership of network communities.
Another way to extend the work would be to include other properties such as node or edge labels, weights or timestamps.
In the end, the aim is to create feasible measures for anonymity that take into account realistic attacker scenarios, while remaining computable in a realistic amount of time.
Moreover, based on all of these measures, network anonymization methods can be created and applied, ultimately allowing researchers and practitioners to share network data with the assurance that the privacy of individuals in it is maximally guaranteed. 
\section*{Methods}\label{sec:methods}

In this section, we discuss relevant definitions, theorems and proofs as well as a detailed description of the proposed algorithms, ending with a description of the overall experimental setup. 

\subsection*{Notation and definitions}\label{sub:notation}
We define a network or graph $G = (V, E)$ as a set of nodes $V$ and set of edges $\{u, v\} \in E$, where $u, v \in V$.
The set of all nodes in a particular graph $G$ is denoted $V(G)$.
Given two nodes $v, w \in V$ we define the distance $d(v, w)$ as the minimum number of adjacent edges that must be traversed to from node $v$ reach~$w$.
If there is no such path, $d(v, w) = \infty$.
It follows that $d(v, v) = 0$, and since the graph is undirected, $d(v, w) = d(w, v)$.
For a node $v$, we define the $d$-neighborhood $N_d(v)$ as the graph containing all nodes that are at most distance $d$ from $v$, and the set of all edges between these nodes.

\subsection*{The $d$-$k$-anonymity algorithm}\label{sub:dkano}

To measure anonymity, we use $d$-$k$-anonymity,~\cite{dejong2023algorithms} an existing approach that uses isomorphism. We follow notation and definitions used in ~\cite{dejong2023algorithms}.
This measure takes as input a graph and a value for the neighborhood distance $d$, and outputs an equivalence partition of the nodes such that in each equivalence class, all nodes are equivalent to each other.
Based on this partition, the fraction of nodes that are $k$-anonymous can be determined for any given value $k$.
Recall that a node is $k$ anonymous if it is equivalent to $k-1$ nodes. 
If $k=1$ for a node, it is unique.
If a node is in an equivalence class of size $k$, i.e., it is equivalent to $k-1$ nodes, we say that it is $k$-anonymous.

\begin{definition}{Graph isomorphism.}\label{def:iso}
Given two graphs $G = (V, E)$ and $G' = (V', E')$, a graph isomorphism is a bijective function $\phi : V \rightarrow V'$ such that for each $v, w \in V$ it holds that $\{\phi(v), \phi(w)\} \in E'$ precisely when $\{v, w\} \in E$.
\end{definition}

We call two graphs isomorphic if there is at least one isomorphism between them.
A special form of isomorphism is an automorphism, denoted by $\gamma$. This is an isomorphism from a graph onto itself.
If there exists an automorphism such that two nodes $v, w \in V$ are mapped onto each other, they are said to be in the same orbit.

\begin{definition}{$d$-Equivalence.}\label{def:equivalent}
Two nodes $v, w \in V$ are $d$-equivalent
if:
1) their d-neighborhoods are isomorphic
and 2) there is an isomorphism $\phi$ such that $\phi(v)=w$.
\end{definition}

If two nodes are $d$-equivalent, they are equivalent with respect to \emph{$d$-$k$-anonymity}. 
It is not difficult to show that $d$-equivalence indeed satisfies the properties of an equivalence relation~\cite{loo2022topological} (identity, reflexivity and transitivity). 
It can also be demonstrated that if two nodes are $d+1$-equivalent, they must also be $d$-equivalent. 
In previous work~\cite{dejong2023algorithms} several methods for efficient calculation of this measure, such as the use of a cache and filtering out nodes based on graph invariants, are discussed.
Code for computing $d$-$k$-anonymity can be found at~\url{https://github.com/RacheldeJong/dkAnonymity}.

\subsection*{The anonymity-cascade algorithm}\label{sub:casc}

Below we describe the workings of \emph{anonymity-cascade}, the algorithm proposed in this paper that extends \emph{$d$-$k$-anonymity} with $\ell \geq 1$ cascading steps. 
It takes as input a graph and depth parameter $\ell$, and outputs the uniquely identified nodes.

\begin{itemize}
  \setlength\itemsep{0em}
  \setlength{\parskip}{0pt}
    \item \textbf{Input: } Graph $G = (V, E)$,  maximum cascading level $\ell$ 
    \item $U_{cur} =  U = $ all nodes with a unique ego network (obtained using \emph{$d$-$k$-anonymity} with $k=1$ and $d=1$)
    \item $\ell_{cur} = 1$
    \item While $\ell_{cur} \leq \ell$ and $U_{cur} \neq \emptyset$:
    \begin{itemize}
    \setlength\itemsep{0em}
    \setlength{\parskip}{0pt}
        \item $U_{new} = \emptyset$
        \item For each node $u \in U_{cur}$:
        \begin{itemize}
            \setlength\itemsep{0em}
            \setlength{\parskip}{0pt}
            \item For each $v \in V(N_1(u)) - \{u\}$:
            \begin{itemize}
                \setlength\itemsep{0em}
                \setlength{\parskip}{0pt}
                \item Check if there is a node $v' \in V(N_1(u)) - \{u, v\}$ such that $v$ and $v'$ are $1$-equivalent
                \item If there is no such node: $U_{new} = U_{new} \cup \{v\}$
            \end{itemize}
        \end{itemize}
        \item $U_{cur} = U_{new} - U$
        \item $U = U \cup U_{new}$ 
        \item $\ell_{cur} = \ell_{cur} + 1$
    \end{itemize}
    \item \textbf{Output: } Set of uniquely identified nodes $U$
\end{itemize}

Interestingly, \emph{anonymity-cascade} has the useful property that all nodes that are unique for $C_1$, are also unique for \emph{$2$-$k$-anonymity} and hence \emph{$2$-$k$-anonymity} is at least as strict as $C_1$ (see Theorem~\ref{thm:casc} and Proof~\ref{proof:casc}).
We use a proof by contradiction and first assume that there is a pair of nodes $v, w$ that is $2$-equivalent and $v$ is unique for $C_1$.
We then show a unique neighborhood is contained in the 2-neighborhood of both $v$ and $w$, which contradicts the assumption that $v$ is unique for $C_1$.
Code for computing \emph{anonymity-cascade} can be found at~\url{https://github.com/RacheldeJong/Anonymitycascade}.

\begin{theorem}\label{thm:casc}
    In a graph $G = (V, E)$, a node $v$ that is unique using anonymity-cascade with $\ell = 1$ ($C_1$) is also unique using $2$-$k$-anonymity.
\end{theorem}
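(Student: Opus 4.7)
I would argue by contradiction along the lines indicated in the text. Suppose $v$ is $C_1$-unique but is not $2$-unique; then there exists some $w \neq v$ that is $2$-equivalent to $v$, so by Definition~\ref{def:equivalent} we obtain a graph isomorphism $\phi : N_2(v) \to N_2(w)$ with $\phi(v) = w$. The strategy is to use $\phi$ to transport the structural witness of $v$'s $C_1$-uniqueness from $N_2(v)$ into $N_2(w)$, and then obtain a conflict with that very witness.

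The first step is to split on the way $C_1$ certifies $v$. In the base case $v$ is already 1-unique, and since $N_1(v) \subseteq N_2(v)$, restricting $\phi$ to $V(N_1(v))$ gives a bijection onto $V(N_1(w))$ that preserves adjacency in both directions (because $\phi$ maps the neighbors of $v$ in $G$ precisely to the neighbors of $w$ in $G$). Hence $v$ and $w$ are 1-equivalent, contradicting the 1-uniqueness of $v$. In the cascade case, there is a 1-unique neighbor $u$ of $v$ such that no other neighbor of $u$ is 1-equivalent to $v$. Since $u$ is adjacent to $v$, every vertex at distance at most $1$ from $u$ is at distance at most $2$ from $v$, so $N_1(u) \subseteq N_2(v)$. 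Therefore $\phi$ restricts to an isomorphism $N_1(u) \to N_1(\phi(u))$ that sends $u$ to $\phi(u)$, witnessing that $u$ and $\phi(u)$ are 1-equivalent; the 1-uniqueness of $u$ then forces $\phi(u) = u$. Consequently $w = \phi(v)$ is itself a neighbor of $u$, and the same restriction argument applied at $v$ yields an isomorphism $N_1(v) \to N_1(w)$ sending $v$ to $w$, so $v$ and $w$ are 1-equivalent. But then $u$ has two distinct 1-equivalent neighbors $v$ and $w$, contradicting the condition under which the cascading step added $v$ to $U$.

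The main technical care is in the repeated assertion that restricting $\phi$ from $N_2(v) \to N_2(w)$ to an inner 1-neighborhood really yields an isomorphism onto the corresponding 1-neighborhood in $N_2(w)$. Two points have to be checked: that the entire vertex set of the inner neighborhood (namely $V(N_1(u))$, and later $V(N_1(v))$) sits inside $V(N_2(v))$, so no neighbor is truncated by the distance bound; and that $\phi$, being an isomorphism of induced subgraphs of $G$, carries the neighbors of any interior vertex bijectively onto the neighbors of its image, both in $G$ and in the restricted subgraph. Once these verifications are in place, the rest of the argument is a short pigeonholing on the 1-uniqueness of $u$ and on the uniqueness of $v$ within the neighbors of $u$. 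The hard part, conceptually, is simply keeping track of which properties of $\phi$ survive each restriction; there is no deep combinatorial content beyond this bookkeeping.
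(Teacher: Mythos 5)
Your proposal is correct and follows essentially the same route as the paper's own proof: argue by contradiction, observe that $N_1(u) \subseteq N_2(v)$, transport the unique ego network of $u$ through the $2$-equivalence isomorphism to force $\phi(u)=u$, conclude $w$ is a neighbor of $u$ that is $1$-equivalent to $v$, and contradict the cascading condition. Your version is somewhat more careful than the paper's terse argument --- you explicitly handle the base case where $v$ is itself $1$-unique (which the paper covers only via its earlier remark that $(d{+}1)$-equivalence implies $d$-equivalence) and you spell out the restriction-of-isomorphism verifications --- but the underlying idea is identical.
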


\begin{proof}\label{proof:casc}
    Node $v$ is connected to a node $u$ where $N_1(u)$ is unique.
    It holds that $N_1(u)$ is contained in $N_2(v)$, as $d(v, u') \leq 2$ for all $u' \in V(N_1(u))$.
    Let us now assume there is a node $w$ which is 2-equivalent to $v$.
    This implies that $N_2(v)$ and $N_2(w)$ are isomorphic.
    Hence, $N_1(u)$ should occur in $N_2(w)$.
    Since $N_1(u)$ is unique, this implies $w$ should be a neighbor of $u$.
    However, $v$ and $w$ can only be 2-equivalent if they are 1-equivalent.
    This contradicts the assumption that $v$ is unique using $C_1$.
    
\end{proof}

\subsection*{Twin node preprocessing}\label{sub:twin}
For the definition of twin nodes, we distinguish between a closed neighborhood $N_d(v)$ as the neighborhood defined in~\hyperref[sub:notation]{Notation and definitions} and the open neighborhood of a node $N_d'(v) = N_d(v) - \{v\}$.

\begin{definition}{Twin nodes.}\label{def:twin}
    Given a graph $G=(V, E)$, nodes $v_1\neq v_2 \in V$ are closed twin nodes if
    $N_1(v_1) = N_1(v_2) $ or open twin nodes if $N_1'(v_1) = N_1'(v_2)$.
\end{definition}

Twin nodes are in the same orbit, which we show by constructing an automorphism that maps the twin nodes onto each other, and all other nodes onto themselves.
The proof used is similar to the proof presented in previous work.~\cite{dejong2023algorithms}

\begin{theorem}\label{thm:twin}
	Given a graph $G  =(V, E)$ and nodes $v_1, v_2 \in V$. If nodes $v_1, v_2$ are twin nodes, then they are in the same orbit.
\end{theorem}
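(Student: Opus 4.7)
The plan is to exhibit a concrete automorphism $\gamma$ of $G$ that maps $v_1$ to $v_2$, by simply swapping $v_1$ and $v_2$ while fixing every other node. That is, set $\gamma(v_1) = v_2$, $\gamma(v_2) = v_1$, and $\gamma(u) = u$ for every $u \in V \setminus \{v_1, v_2\}$. This is manifestly a bijection on $V$, so the only thing to verify is that $\gamma$ preserves the edge relation, i.e.\ that $\{u, w\} \in E$ if and only if $\{\gamma(u), \gamma(w)\} \in E$ for every pair $u, w \in V$.

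I would verify this by cases on how many of $u, w$ lie in $\{v_1, v_2\}$. If neither does, the statement is trivial. If $u = v_1$ and $w = v_2$ (or vice versa), the edge $\{v_1, v_2\}$ is mapped to $\{v_2, v_1\}$, which is the same edge. The interesting case is when exactly one endpoint, say $u$, lies in $\{v_1, v_2\}$ and $w \notin \{v_1, v_2\}$; here I need to check that $w$ is a neighbor of $v_1$ iff $w$ is a neighbor of $v_2$. This is exactly where the twin hypothesis is used.

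The main subtlety, and the only real obstacle, is that Definition~\ref{def:twin} splits into the closed and open cases, and these have to be reconciled. For open twins the requirement is immediate: $N_1'(v_1) = N_1'(v_2)$ gives $w \in N_1'(v_1) \iff w \in N_1'(v_2)$ directly. For closed twins, $N_1(v_1) = N_1(v_2)$ forces $v_1 \in N_1(v_2)$ (since $v_1 \in N_1(v_1)$ and $v_1 \neq v_2$), so $v_1$ and $v_2$ are adjacent, and one obtains $N_1'(v_1) \setminus \{v_2\} = N_1'(v_2) \setminus \{v_1\}$; restricted to $w \notin \{v_1, v_2\}$ this again gives $w \in N_1'(v_1) \iff w \in N_1'(v_2)$. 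So in both cases $\gamma$ preserves edges incident to $v_1$ or $v_2$.

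Having established that $\gamma$ is a graph automorphism with $\gamma(v_1) = v_2$, by the definition of orbits preceding Definition~\ref{def:equivalent} we conclude that $v_1$ and $v_2$ lie in the same orbit, completing the proof.
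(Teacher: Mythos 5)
Your proposal is correct and follows essentially the same route as the paper: both construct the transposition $\gamma$ swapping $v_1$ and $v_2$ while fixing all other nodes, and verify edge preservation via the twin property (including the $\{v_1,v_2\}\in E$ case). Your explicit reconciliation of the open and closed twin cases is a slightly more careful spelling-out of a step the paper's proof leaves implicit, but it is not a different argument.
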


\begin{proof}\label{proof:twin}
    Given twin nodes $v_1$ and $v_2$, we define an automorphism $\gamma: V\to V$ that swaps $v_1$ and $v_2$ and maps all other nodes onto themselves. 
    To show that $\gamma$ is a valid automorphism, note that for any $\{v_1, w \neq v_2\}\in E$, we have $\{\gamma(v_1),\gamma(w)\} = \{v_2, w\} \in E$ because of the twin node property (Definition~\ref{def:twin}).
    Similarly we have $\{\gamma(v_2), \gamma(w)\}= \{v_1, w\} \in E$ for all $\{v_2,w\neq v_1\}\in E$.
    If $\{v_1, v_2\} \in E$, then $\{\gamma(v_1), \gamma(v_2)\} = \{v_1, v_2\} \in E$.
    Hence, $v_1$ and $v_2$ are in the same orbit. 
\end{proof}

We find twin nodes using the algorithm below, which serves as a preprocessing step before \emph{$d$-$k$-anonymity} or \emph{anonymity-cascade} is executed. 
It is worth to note that a node can be a twin of more than one node, and hence a set of twin nodes can have a size larger than two.

\begin{itemize}
  \setlength\itemsep{0em}
    \item \textbf{Input: } Graph $G = (V, E)$
    \item Create two dictionaries $M_o$ and $M_c$, used to map neighborhoods onto nodes with this resp. open and closed neighborhood
    \item For each node $v \in V$:
    \begin{itemize}
        \item If $V(N_1'(v)) \in M_o$: $M_o[V(N_1'(v))] = M_o[V(N_1'(v))] \cup \{v\}$
        \item Else if $V(N_1(v))  \in M_c$: $M_c[V(N_1(v))] = M_c[V(N_1(v))] \cup \{v\}$
        \item Else: $M_o[V(N_1'(v))] = \{v\}$ and $M_c[V(N_1(v))] = \{v\}$
    \end{itemize}
    \item \textbf{Output: } Sets of open and closed twin nodes $M_o$, $M_c$
\end{itemize}

After finding twin nodes, they are taken into account as follows.
When computing \emph{$d$-$k$-anonymity}, we select one node for each set of twin nodes that has to be taken into account when computing anonymity.
All other twin nodes are not taken into account during computation, and afterwards added to their corresponding equivalence classes.~\cite{dejong2023algorithms}
When computing twin-uniqueness for both \emph{$d$-$k$-anonymity} and \emph{anonymity-cascade}, both unique nodes and nodes for which all candidates are twins are twin-unique.
For \emph{anonymity-cascade}, we additionally reuse twin-unique nodes to continue the cascading effect.

\subsection*{Experimental setup}\label{sub:exp}
The results for \emph{$d$-$k$-anonymity} and \emph{anonymity-cascade} were obtained using the default settings of the code repositories linked above. 
When computing twin-uniqueness, for both \emph{$d$-$k$-anonymity} and \emph{anonymity-cascade} we compute and take into account both open and closed twin nodes; these computation times are not included in reported runtimes for \emph{$d$-$k$-anonymity}, and are included for runtimes for \emph{anonymity-cascade}.

The graph models used for the experiments consist of the Erdős–Rényi,~\cite{erdos1960evolution} Barabási-Albert,~\cite{barabasi1999emergence} and Watts-Strogatz~\cite{watts1998collective} models generated using NetworkX.~\cite{networkx}
For the WS graphs we use random wiring probability $p_r=0.5$, similar to previous work.~\cite{romanini2020privacy}
All results reported on graph models are averaged over ten networks.

For experiments on real-world networks, self loops and nodes without edges are removed from the original network dataset. 
Moreover, edge weights, timestamps and directionality are ignored. 
Reported runtimes are averaged over five runs.
All experiments are conducted on a machine with 1TB RAM, 64 AMD EPYC 7601 cores, and 128 threads. 
During the experiments, each run uses one thread, which is not shared with other processes.

\section*{Data availability}
All network datasets are available in the repositories cited in Table~\ref{tab:data}.

\bibliography{bibliography}

\section*{Acknowledgements}

This research was made possible by the Platform Digital Infrastructure SSH (http://www.pdi-ssh.nl). We would also like to thank the POPNET team (https://www.popnet.io) for various helpful suggestions and discussions.

\section*{Author contributions}

R.J., M.L., and F.T. conceptualized the study and methodology. R.J. wrote the manuscript. M.L. and F.T.
acquired funding, supervised and administered the project. R.J. developed the software, performed
the experiments and visualized the results. All authors reviewed and approved the manuscript.

\section*{Competing interests}
The authors declare no competing interests.

\newpage
\section*{Supplementary information}\label{sec:supmat}

\subsection*{$d$-$k$-Anonymity for  $d=3$, $d=4$ and $d=5$}
Supplementary Figure~\ref{fig:modelsdkcomplete} shows additional results for Section  ``Beyond the ego network in graph models'' of the main manuscript.
The figure is similar to Fig. 2, showing the fraction of unique nodes in graph models with knowledge of the $d$-neighborhood, for $d=3$, $d=4$ and $d=5$.
The figures show that  further increasing the distance beyond a value 2 has a minimal de-anonymizing effect for these graph models; the largest effect is observed when moving from distance 1 to distance 2, as shown in Fig. 2.

Supplementary Figure~\ref{fig:dkcomplete} shows additional results for Section ``Beyond the ego network in real-world network data'' of the main manuscript, including uniqueness for $d$-$k$-anonymity with $d=1$ up to $d=5$.
The figure is similar to Fig. 5, and includes results shown in Fig. 3 for $k=1$.
The figure shows that information beyond $d=2$ has only a small to no de-anonymizing effect on these networks.

    \begin{figure}[htbp]
            \centering
            \includegraphics[width=\textwidth]{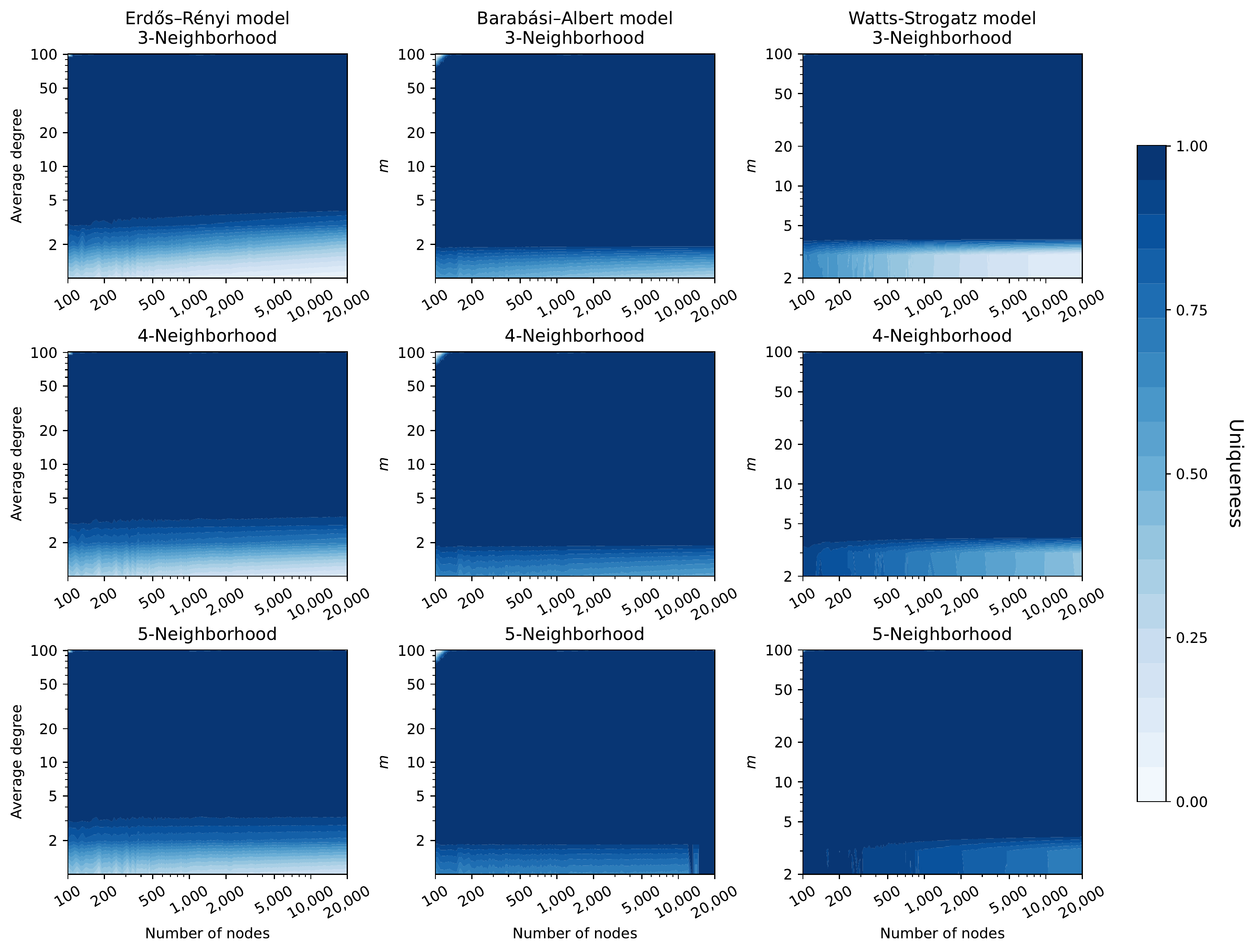}
            \caption{Uniqueness maps using $d$-$k$-anonymity. Maps show network uniqueness, indicated by color, when using information of the $3$-neighborhood (top row), $4$-neighborhood (middle row) and $5$-neighborhood. Results are shown for the Erdős–Rényi (left), Barabási–Albert (middle) and Watts–Strogatz (right) model with different sizes (horizontal axis) and average degree or \added{ $m$, an } equivalent thereof (vertical axis).}
           \label{fig:modelsdkcomplete}
    \end{figure}

    \begin{figure}[htbp]
            \centering
            \includegraphics[width=\textwidth]{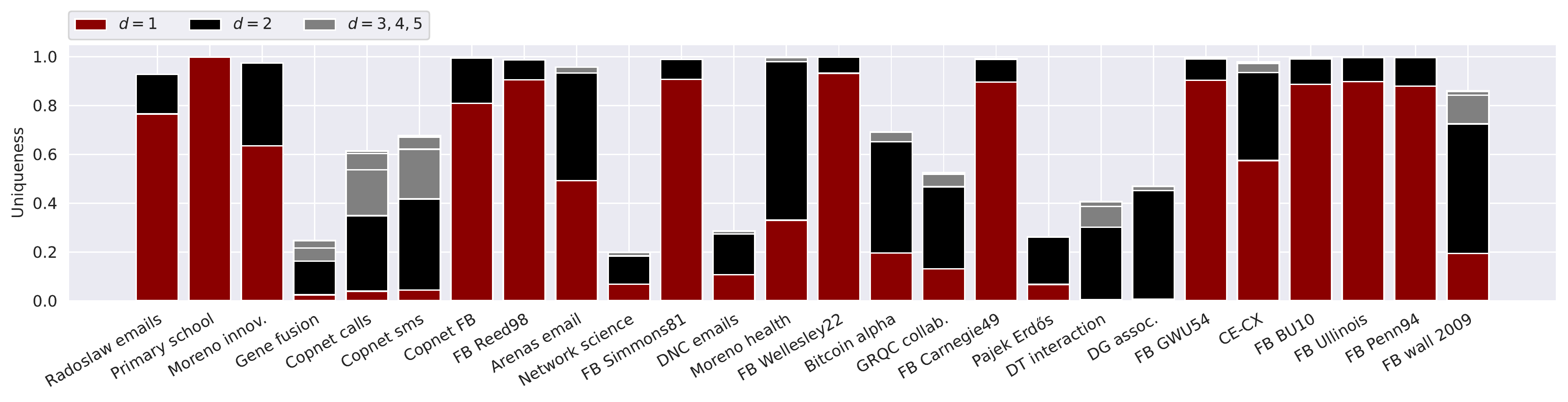}
            \caption{Uniqueness in real-world networks for $d$-$k$-Anonymity with $d=1$ up to $d=5$. Fraction of unique nodes (vertical axis) on different datasets (horizontal axis) when accounting for different levels of information: 1-neighborhood (red), 2-neighborhood (black), 3- 4- and 5-neighborhood (grey).}
           \label{fig:dkcomplete}
    \end{figure}
\newpage

\subsection*{$d$-$k$-Anonymity in dense graph models}
\added{Similarly to the uniqueness maps in Fig. 2 and 4 of the main manuscript, we show the uniqueness of graph models with a density up to their maximum degree in Supplementary Fig.~\ref{fig:modelsdense}.
A step size of 0.01 is used, and all results are averaged over 10 runs.}

\added{Overall, we see that for most of the graph models, for higher densities all nodes become unique.
We observe non-uniqueness for graphs with either low densities, or for very high densities.
The graphs with uniqueness at high densities are (near)-complete graphs, and can only be clearly distinguished in the top left of these figures for small graphs.}

    \begin{figure}[htbp]
            \centering
            \includegraphics[width=\textwidth]{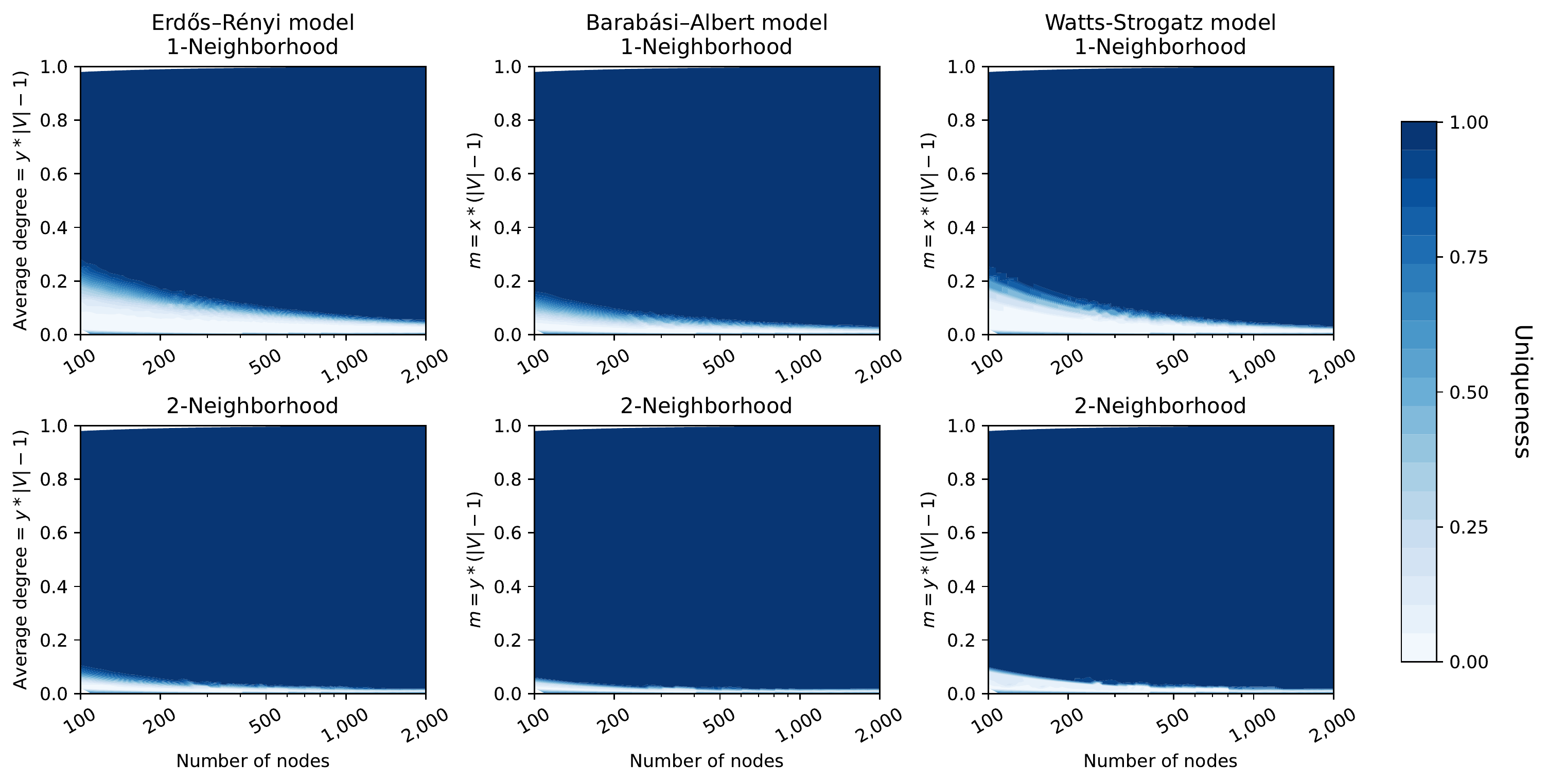}
            \caption{\added{Uniqueness maps using $d$-$k$-anonymity in dense graph models. Maps show network uniqueness, indicated by color, when using information of the $1$-neighborhood (top row) and $2$-neighborhood (bottom row). Results are shown for the Erdős–Rényi (left), Barabási–Albert (middle) and Watts–Strogatz (right) model with different sizes (horizontal axis) and average degree or $m$, an equivalent thereof (vertical axis).}}
           \label{fig:modelsdense}
    \end{figure}

\newpage
\subsection*{Real-world networks vs. graph models}
\added{To compare the results on real-world networks with those obtained for the graph models, Supplementary Fig. ~\ref{fig:models_exp} shows for each network the position in the uniqueness map, similar to Figures 2 and 4 of the main manuscript, and indicates the uniqueness by color.
Similar to the results obtained by Romanini \emph{et al.}~\cite{romanini2020privacy} and for the graph models, the figure shows that average degree has a large effect on the uniqueness obtained.
The number of nodes has an overall smaller effect.}

    \begin{figure}[htbp]
        \centering
        \includegraphics[width=0.7\textwidth]{./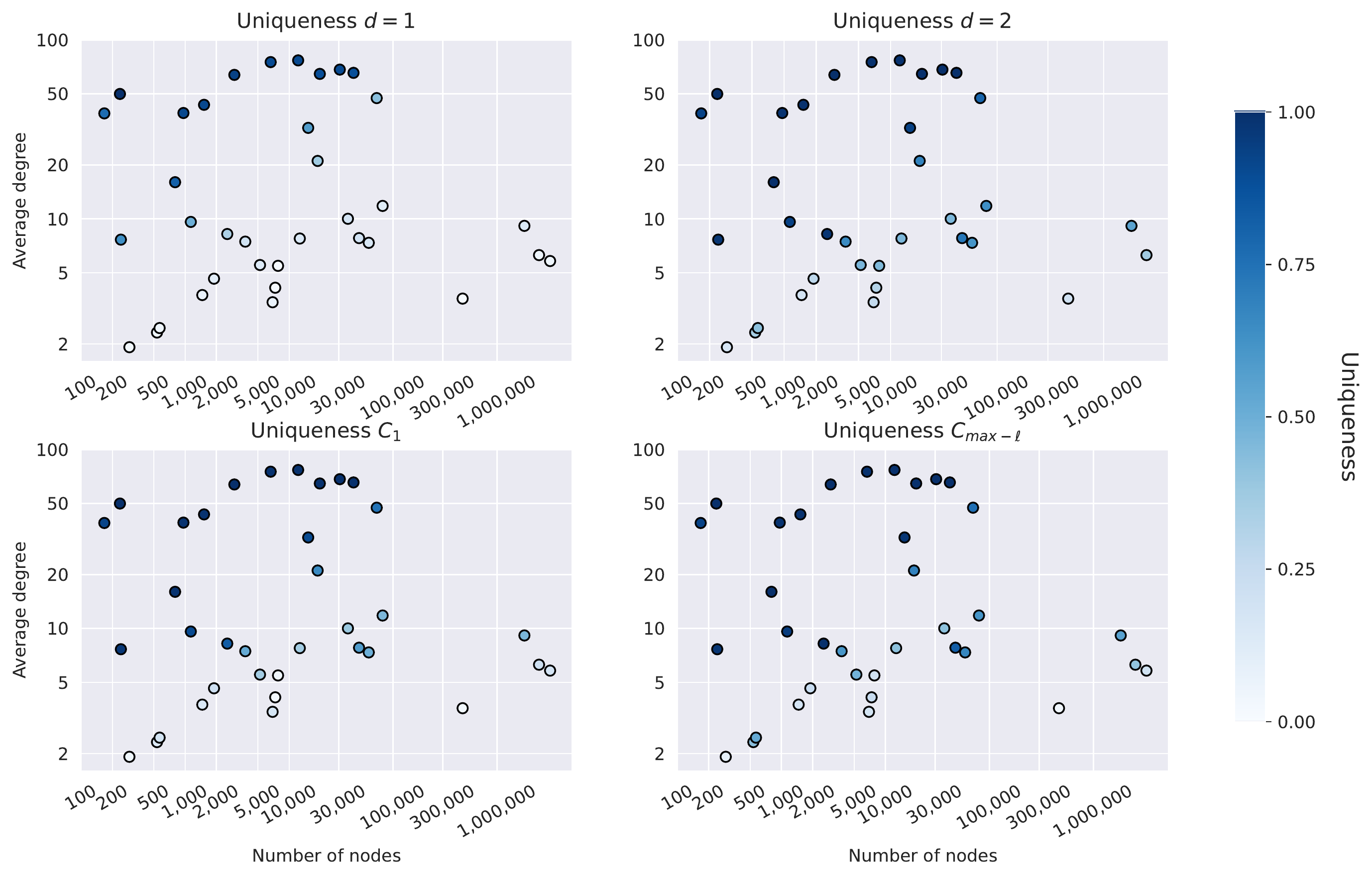}
        \caption{\added{Real-world networks positioned in uniqueness maps. For all networks listed in Table 1, their position in the uniqueness map, which is used in Figures 2 and 4 of the main manuscript, is shown and the uniqueness is indicated based on the color.}}
        \label{fig:models_exp}
    \end{figure}

\subsection*{Graph properties vs. experimental results}

\begin{table}[!b]
\scriptsize
\center
\begin{tabular}{@{}rrlrlrlrl@{}}
\toprule
                    & \multicolumn{2}{c}{Uniqueness $d=1$}                                                                   & \multicolumn{2}{c}{$max-\ell$}                                                                         & \multicolumn{2}{c}{Uniqueness $(d=2) - (d=1)$}                                                    & \multicolumn{2}{c}{Uniqueness $(C_1) - (d=1)$}                                                            \\ \cmidrule(l){2-9} 
Network property            & \multicolumn{1}{l}{\begin{tabular}[c]{@{}r@{}}Pearson \\ correlation\end{tabular}} & p-value           & \multicolumn{1}{r}{\begin{tabular}[c]{@{}r@{}}Pearson \\ correlation\end{tabular}} & p-value           & \multicolumn{1}{c}{\begin{tabular}[r]{@{}r@{}}Pearson \\ correlation\end{tabular}} & p-value  & \multicolumn{1}{r}{\begin{tabular}[r]{@{}r@{}}Pearson \\ correlation\end{tabular}} & p-value           \\ \midrule
Nodes               & -0.113                                                                             & 5.38E-01          & 0.203                                                                              & 2.65E-01          & -0.123                                                                             & 5.03E-01 & 0.385                                                                              & \textbf{2.97E-02} \\
Edges               & 0.305                                                                              & 8.91E-02          & -0.146                                                                             & 4.24E-01          & -0.098                                                                             & 5.93E-01 & 0.063                                                                              & 7.33E-01          \\
Average degree      & \textbf{0.865}                                                                     & \textbf{1.75E-10} & \textbf{-0.574}                                                                    & \textbf{5.85E-04} & -0.068                                                                             & 7.13E-01 & -0.322                                                                             & 7.24E-02          \\
Median degree       & \textbf{0.895}                                                                     & \textbf{4.79E-12} & \textbf{-0.621}                                                                    & \textbf{1.48E-04} & -0.041                                                                             & 8.24E-01 & \textbf{-0.418}                                                                    & \textbf{1.71E-02} \\
Max degree          & \textbf{0.599}                                                                     & \textbf{2.89E-04} & -0.344                                                                             & 5.37E-02          & -0.077                                                                             & 6.75E-01 & -0.266                                                                             & 1.41E-01          \\
Density             & \textbf{0.412}                                                                     & \textbf{1.93E-02} & \textbf{-0.480}                                                                    & \textbf{5.42E-03} & -0.002                                                                             & 9.92E-01 & -0.228                                                                             & 2.10E-01          \\
Transitivity        & 0.007                                                                              & 9.70E-01          & -0.263                                                                             & 1.46E-01          & -0.001                                                                             & 9.96E-01 & -0.035                                                                             & 8.48E-01          \\
Assortativity       & 0.085                                                                              & 6.44E-01          & 0.122                                                                              & 5.06E-01          & 0.122                                                                              & 5.06E-01 & 0.228                                                                              & 2.09E-01          \\
Diameter            & \textbf{-0.599}                                                                    & \textbf{2.92E-04} & \textbf{0.809}                                                                     & \textbf{2.06E-08} & -0.136                                                                             & 4.58E-01 & 0.179                                                                              & 3.28E-01          \\
Average path length & \textbf{-0.740}                                                                    & \textbf{1.28E-06} & \textbf{0.854}                                                                     & \textbf{4.94E-10} & -0.114                                                                             & 5.35E-01 & 0.112                                                                              & 5.41E-01          \\ \bottomrule
\end{tabular}
\caption{\added{Graph properties and correlation with results. Given various graph measures, denoted in the leftmost column, the Pearson correlation is shown with the outcomes (second to last columns). For each outcome both the Pearson correlation and p-value are given. Values with $p<0.05$ and Pearson correlation larger than 0.4, or smaller than -0.4, are shown in bold.}}
\label{tab:corr}
\end{table}

\added{To get a better understanding of the obtained results in real-world networks presented in the ``Beyond the ego network'' and ``Anonymity-cascade'' sections of the main manuscript, we aim to find which graph properties relate to various uniqueness results and experimental outcomes.
The results considered are the uniqueness using $1$-$k$-anonymity, the highest achieved cascading depth $max-\ell$, and the difference obtained in uniqueness by instead of $1$-$k$-anonymity considering $2$-$k$-anonymity, or one level of cascading $C_1$.
Since we have a diverse set of networks from different categories, we compute the Pearson correlation and the corresponding $p$-value in Supplementary Table~\ref{tab:corr} for each combination of graph property (computed using igraph~\cite{igraph}) and outcome.
Results are computed for the networks in Table 1 of the main manuscript, excluding the four largest networks due to large runtimes for computing several of the properties.}

\added{The results in Supplementary Table~\ref{tab:corr} show that there are correlations between various graph properties and both the uniqueness at $d=1$ and $max-\ell$.
For uniqueness, the average, median and maximum degree and density are all positively correlated.
Overall, these properties are very related to (or to some extent equivalent to) density, which corresponds to the results obtained by Romanini \emph{et al.}:~\cite{romanini2020privacy} networks with a higher density overall have a higher uniqueness.
This relation can also be seen for graph models in the uniqueness shown in Fig. 2 of the main manuscript.
The table also shows that diameter and average path length are negatively correlated to the uniqueness.
For $max-\ell$, the reverse holds. This is negatively correlated with density, and positively with diameter and average path length.}

\added{For the difference in uniqueness when looking beyond the neighborhood, we only found two substantial correlations when using one level of cascading.
Overall a higher maximum degree is is negatively correlated with a larger effect of one level of cascading.
For all combinations with a substantial correlation we include a plot with regression line in Supplementary Fig.~\ref{fig:properties_corr}.}

    \begin{figure}[b!]
        \centering
        \includegraphics[width=.89\textwidth]{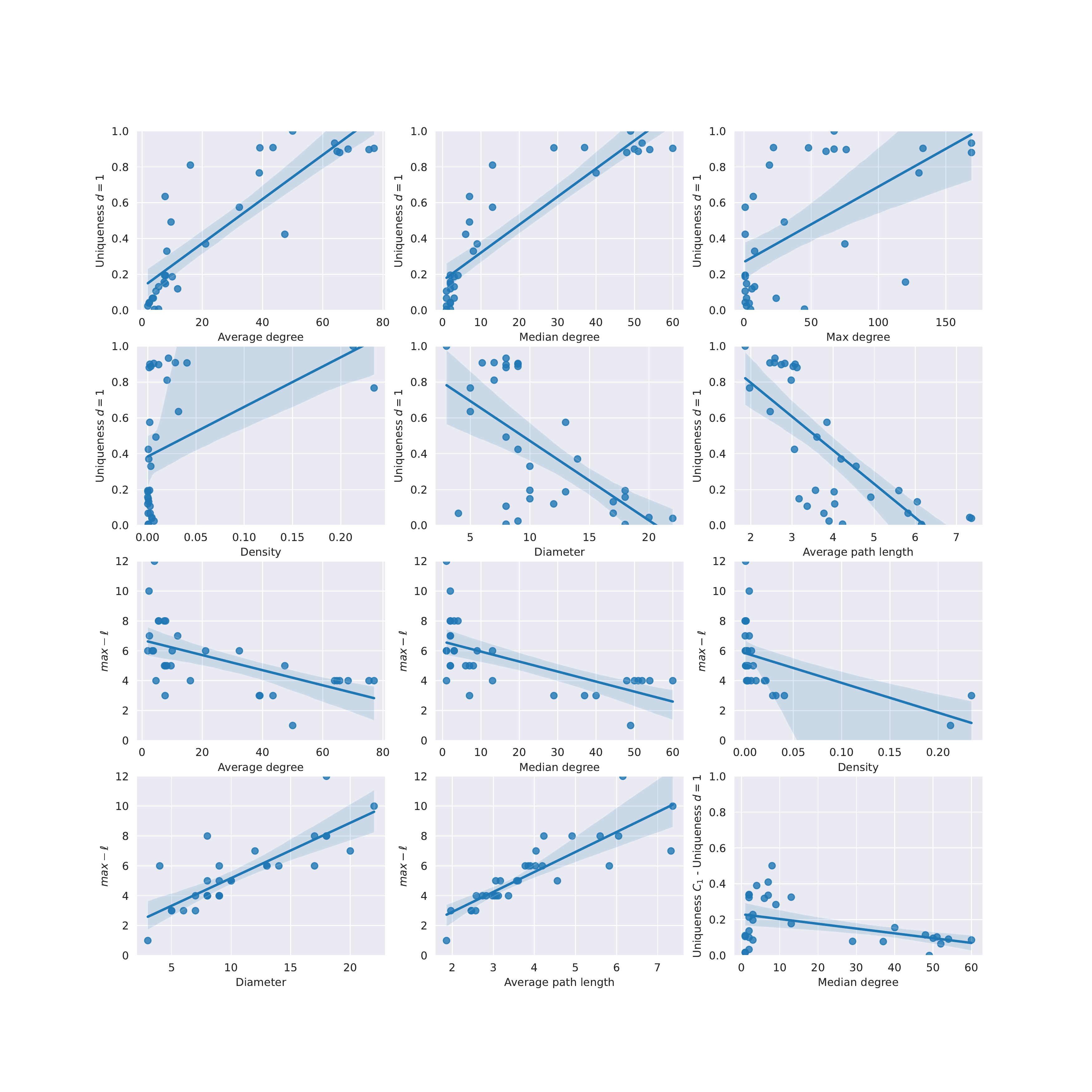}
        \caption{\added{Graph properties and outcomes. Each vertical axis denotes a particular experimental outcome (listed in the first row of Supplementary Table~\ref{tab:corr}), each horizontal axis a graph property (listed in the first column of Supplementary Table~\ref{tab:corr}). Only combinations with $p<0.05$ and Pearson correlation larger than 0.4, or smaller than -0.4 are included.}}
        \label{fig:properties_corr}
    \end{figure}

\newpage
\subsection*{Effect at different levels of anonymity-cascade}

The following figures are supplementary to Section ``Anonymity-cascade'' of the main manuscript and give additional insights into the effect on anonymity at different levels of \emph{anonymity-cascade}.
Supplementary Figure~\ref{fig:casclevels} shows the largest level of cascading for the graph models.
For these results, \emph{anonymity-cascade} is used for each graph model, the highest level ($max-\ell$) is measured and the average over 10 runs is reported.
This overall shows that ($max-\ell$) increases as the graph size increases. 
In some more sparse graphs with average degree or $m$ around 2 (ER, BA) or 5 (WS), these levels achieve the highest values.

Supplementary Figure~\ref{fig:levels} shows how many nodes are uniquely identified at each level of cascading in the real-world networks.
Overall, the results show that the largest de-anonymizing effect occurs at lower levels of cascading.
The cascading step can continue for many levels, but with a much smaller de-anonymizing effect.

    \begin{figure}[htbp]
        \centering
        \includegraphics[width=\textwidth]{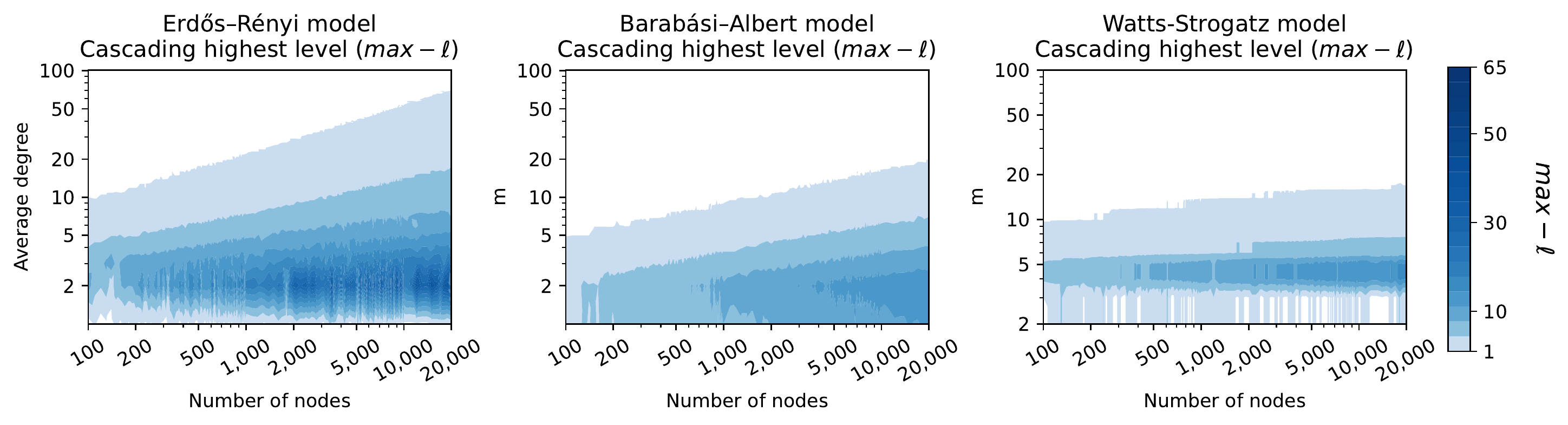}
        \caption{Cascading levels in graph models. Results for graph models showing the highest level achieved ($max-\ell$) by the cascading algorithm for ER (left) BA (middle) and WS (right) graphs. All results are averaged over 10 generated graphs.}
        \label{fig:casclevels}
    \end{figure}
    
    \begin{figure}[htbp]
        \centering
        \includegraphics[width=0.4\linewidth]{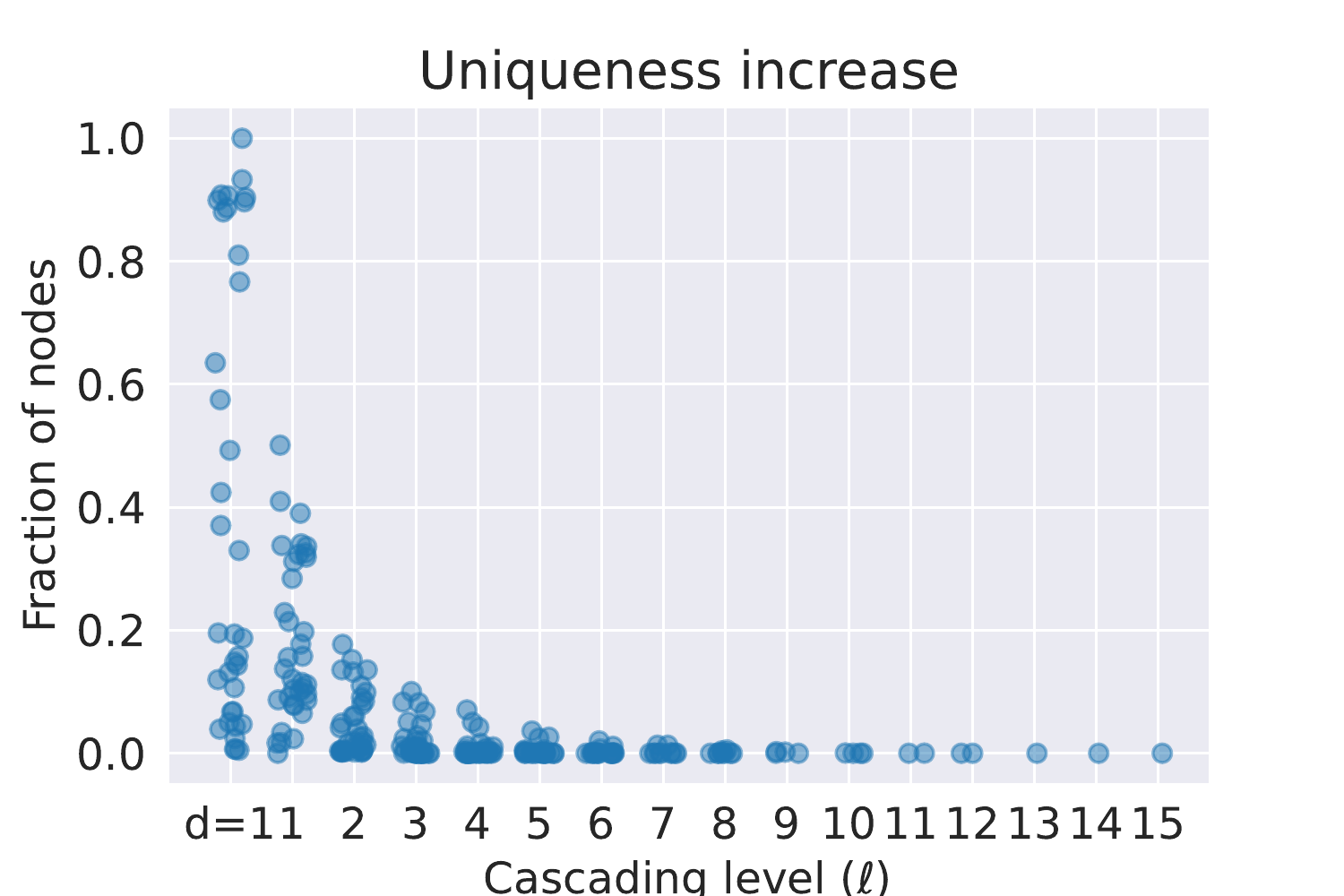}
        \includegraphics[width=0.4\linewidth]{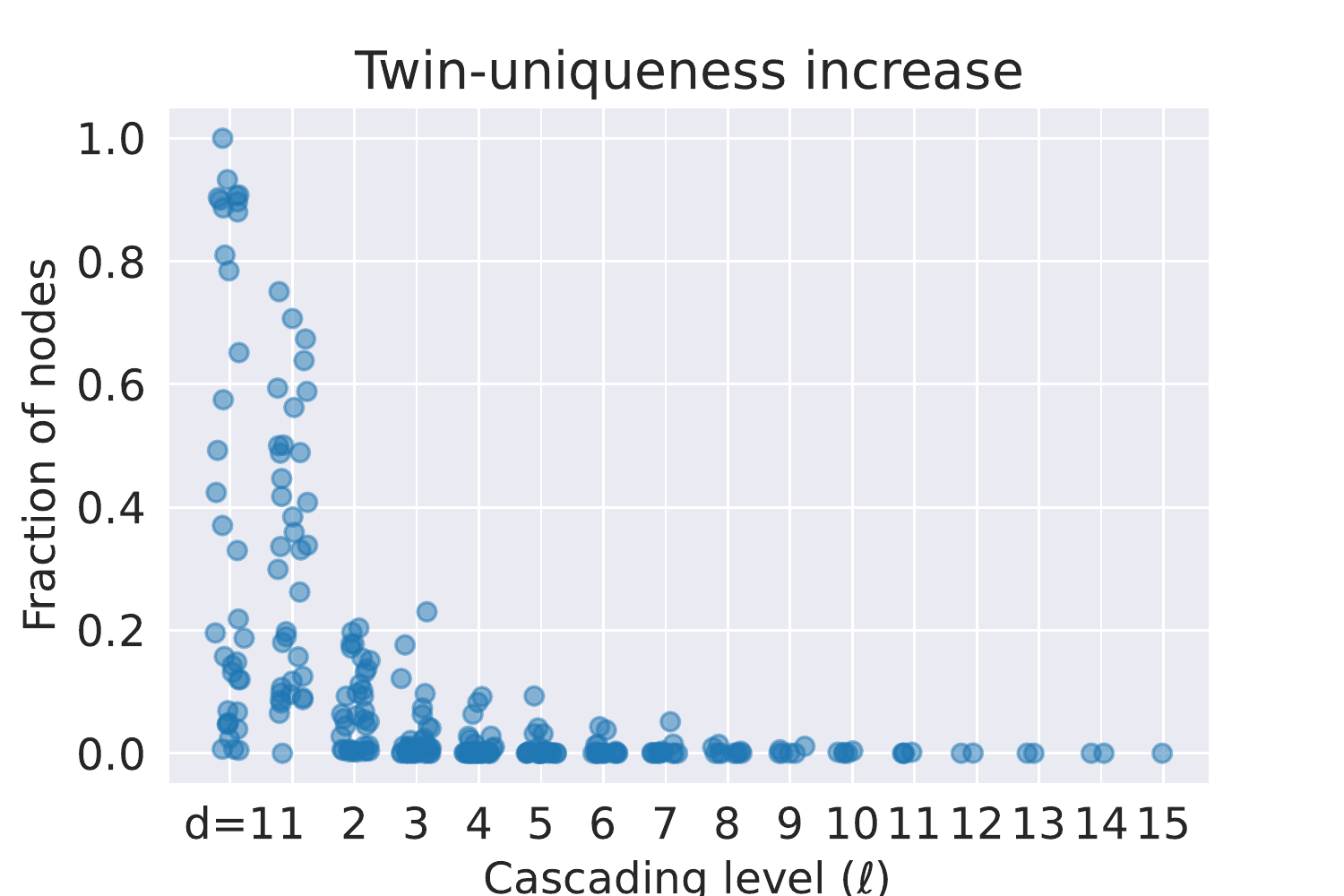}
        \caption{Cascading effect per level. Fraction of identified nodes (vertical axis) at $d=1$ and each subsequent level of cascading when accounting for uniqueness (left) and twin-uniqueness (right). Results summarized over all networks in Table 1 of the main manuscript.}
        \label{fig:levels}
    \end{figure}

\newpage
\subsection*{Twin node effect}
Supplementary Figure~\ref{fig:twins} shows the effect of twin nodes and how it influences twin-uniqueness vs. uniqueness.
When a network contains a higher fraction of twin nodes, the overall effect is larger.

    \begin{figure}[htbp]
        \centering
        \includegraphics[width=0.4\textwidth]{./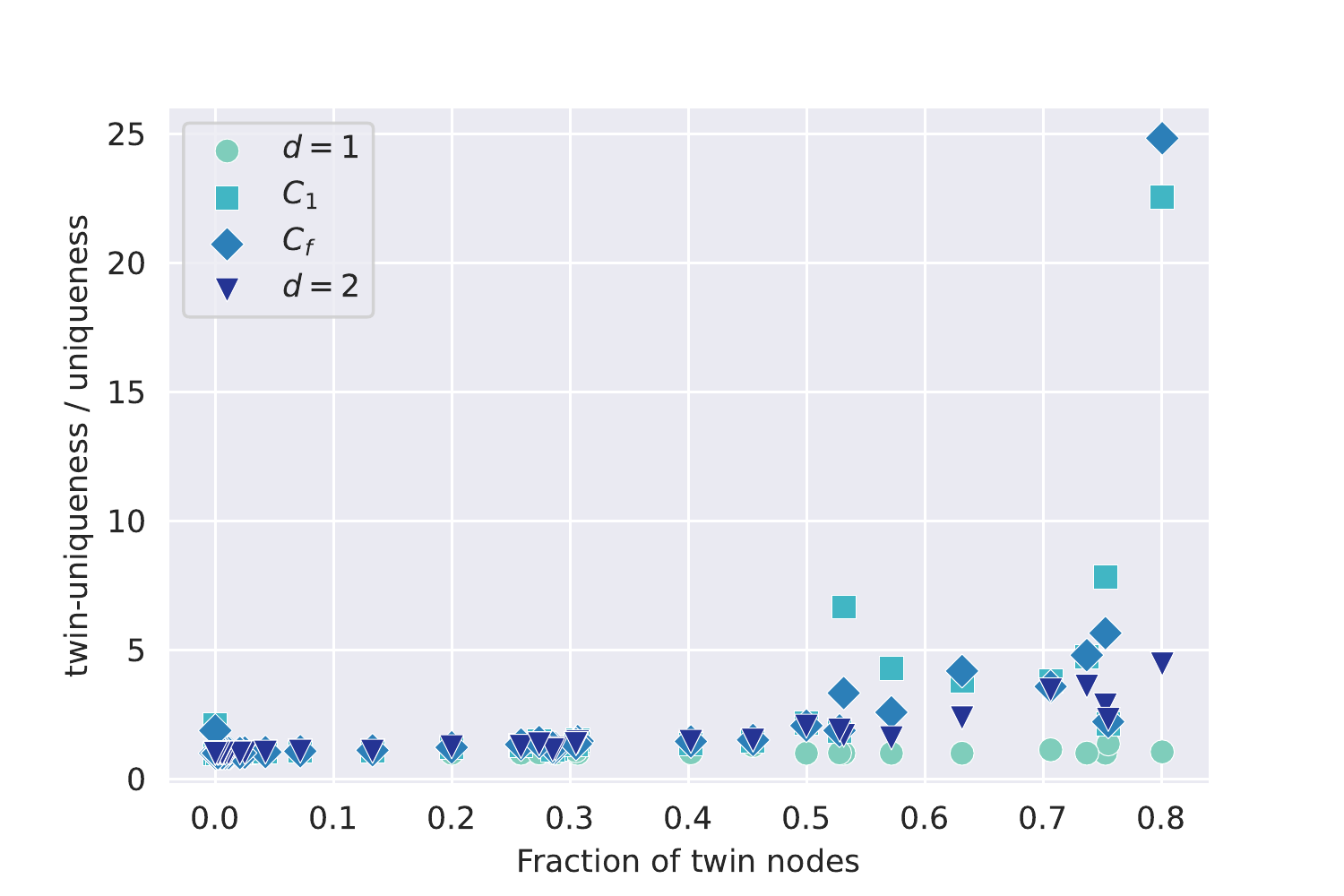}
        \caption{Effect of twin nodes on twin-uniqueness. Relation between the fraction of twin nodes in the network (horizontal axis) and how much the uniqueness increases (vertical axis) compared to when twin nodes are not taken into account.}
        \label{fig:twins}
    \end{figure}

\end{document}